\documentclass[runningheads]{llncs}

\usepackage[utf8]{inputenc}

\usepackage{mathtools}
\usepackage{amsfonts}
\usepackage{amssymb}

\usepackage{xcolor}

\newcommand\logeq{\mathrel{\vcentcolon\Leftrightarrow}}

\newcommand{\cat}{%
	\mathbf %
}

\newcommand{\domain}[ 1 ]{%
	\mathrm{dom}(#1)%
}

\newcommand{\codomain}[ 1 ]{%
	\mathrm{cod}(#1)%
}

\newcommand{\idarrow}[ 1 ]{%
	\mathbf{1}_{#1}%
}

\newcommand{\stern}{\textbf{$\star$}}

\newcommand{\struc}[1]{\mathcal{#1}}

\begin{document}

\title{Computer-supported Exploration of a Categorical Axiomatization of Modeloids}
\titlerunning{Categorical Axiomatization of Modeloids}

\author{Lucca Tiemens\inst{1} \and Dana S. Scott\inst{2} \and Christoph Benzmüller\inst{3,4} \and \\ Miroslav Benda\inst{5}}	
\authorrunning{L. Tiemens, D. S. Scott, C. Benzmüller and M. Benda}	
	
\institute{Technische Universität Berlin, Berlin, Germany\\ 
	\email{tiemens@campus.tu-berlin.de} \and
University of California, Berkeley, USA \\
\email {scott@andrew.cmu.edu} \and
	Freie Universität Berlin, Berlin, Germany \and 
	University of Luxembourg, Esch-sur-Alzette, Luxembourg  \\
\email{c.benzmueller@fu-berlin.de}
\and
	Formerly at University of Washington, Seattle, USA \\
\email{miro@oasa.com}
}
	
\maketitle
\begin{abstract}
	A modeloid, a certain set of partial bijections, emerges from the idea to abstract from a structure to the set of its partial automorphisms. It comes with an operation, called the derivative, which is inspired by Ehrenfeucht-Fraïssé games. In this paper we develop a generalization of a modeloid first to an inverse semigroup and then to an inverse category using an axiomatic approach to category theory. We then show that this formulation enables a purely algebraic view on Ehrenfeucht-Fraïssé games.
\end{abstract}

\section{Introduction}

Modeloids have been introduced by M. Benda \cite{Benda1979}. They can be seen as an abstraction from a structure to a partial automorphism semigroup created in the attempt to study properties of structures from a different, more general angle which is independent of the language that is defining the structure. We do not follow Benda's original formulation in terms of an equivalence relation but treat modeloids as a certain set of partial bijections. Our recent interest in them was triggered by D. Scott's suggestion to look at the modeloidal concept from a categorical perspective. The new approach aims at establishing a framework in which the relationship between different structures of the same vocabulary can be studied by means of their partial isomorphisms. The overall project is work in progress, but as a first result we obtained a purely algebraic formulation of Ehrenfeucht-Fraïssé games.


Throughout the project, computer-based theorem proving is employed in order to demonstrate and explore the virtues of automated and interactive theorem proving in context.  The software used is Isabelle/HOL~\cite{nipkow2002isabelle} in the 2019 Edition. We are generally interested in conducting as many proofs of lemmas and theorems as possible by using only the \textit{sledgehammer}\footnote{\textit{Sledgehammer}~\cite{Sledgehammer} is linking interactive proof development in Isabelle/HOL with anonymous calls to various integrated automated theorem proving systems. Among others, the tool converts the higher-order problems given to it into first-order representations for the integrated provers, it calls them and analyses their responses, and it tries to identify minimal sets of dependencies for the theorems it proves this way.} command,
 and to study how far full proof automation scales in this area. Reporting on these practically motivated studies, however, will not be the focus of this paper. We only briefly mention here how we encoded, in Isabelle/HOL, an inverse semigroup and an inverse category, and we present a summary of our practical experience.

Inverse semigroups (see e.g. \cite{lawson1998} for more information) play a major role in this paper. They serve as a bridge between modeloids and category theory. The justification for this is given by the fact that an inverse semigroup can be faithfully embedded into a set of partial bijections by the Wagner-Preston representation theorem. This opens up the possibility of generalizing modeloids, which are sets of partial bijections, to the language of inverse semigroup theory.

Once there, we have a natural transition from an inverse semigroup to an inverse category (for further reference see \cite{Linckelmann2012}). We introduce the theory of inverse categories by an equational axiomatization that enables computer-supported reasoning. This serves as the basis for our formulation of a categorical modeloid.

In each stage of generalization the derivative, a central operation in the theory of modeloids, can be adapted and reformulated. This operation is about extending the elements of a modeloid. Suppose that $\tau$ is a finite relational vocabulary meaning that $\tau$ consists only of finitely many relation and/or constant symbols.
As it turns out, the derivative on a categorical modeloid on the category of finite $\tau$-structures is equivalent to playing an Ehrenfeucht-Fraïssé game. 

This paper is organized in the following way. In section \ref{sc: Modeloid} we define both modeloids and the derivative operation. We then turn to inverse semigroups in section \ref{sc: InverseSemigroupAndModeloid} and develop the axiomatization of a modeloid in inverse semigroup language. Section \ref{sc: CategoricalAxiomatizationModeloid} shows how to represent a category in Isabelle/HOL and defines the categorical modeloid. After the derivative operation is established in this context, we give an introduction to Ehrenfeucht-Fraïssé games in Section \ref{sc: AlgebraicEFGames} and present the close connection between the categorical derivative and Ehrenfeucht-Fraïssé games. Proofs for the stated theorems, propositions and lemmas are presented in the extended preprint \cite{ARXIV} of this paper (cf.~also \cite{BachelorTiemens}); the Isabelle/HOL source files are available online.\footnote{See \url{http://christoph-benzmueller.de/papers/RAMICSadditionalMaterial.zip}.}

\section{Modeloids}\label{sc: Modeloid}

Let us first recall the definitions of a partial bijection and of partial composition.

\begin{definition}[Partial bijection and partial composition]
	A partial bijection $f:X \to Y$ is a partial injective function. The inverse of $f$, also a partial bijection and denoted by $f^{-1}$, is given by the preimage of the elements in the codomain of $f$: $f^{-1}(y) = f^{-1}(\{y\}), \, \forall y \in \text{cod} (f).$
	  
	The composition between two partial functions $f:X \to Y$ and $g:Y \to Z$ is defined only on $f^{-1}(dom (g) \cap cod (f) )$. Then the partial composition
	\[(g\circ f) (x) = g(f(x)), \quad \forall x \in f^{-1}(dom (g) \cap cod (f) )\]
	is well-defined.
\end{definition}

Furthermore, let $\Sigma$ be a finite non-empty set. We then define 
\begin{equation} 
	F(\Sigma) := \{f:\Sigma \to \Sigma \mid f \textnormal{ is a partial bijection} \}  
\end{equation}
as the set of all partial bijections on $\Sigma$.

\begin{definition}[Modeloid \cite{Benda1979}] \label{def: Modeloid}
	Let $M \subseteq F(\Sigma)$. $M$ is called a modeloid on $\Sigma$ if, and only if, it satisfies the following axioms:
	\begin{enumerate}
		\item Closure of composition: $f,g \in M \Rightarrow f \circ g \in M$
		\item Closure of taking inverses: $f \in M \Rightarrow f^{-1} \in M$
		\item Inclusion property: $f \in M$ and $A \subset dom (f)$ implies $f|_A \in M$ 
		\item Identity: $id_\Sigma \in M$
	\end{enumerate}
\end{definition} 

As such, a modeloid is a set of partial bijections which is closed under composition and taking inverses, which has the identity on $\Sigma$ as a member,  and which satisfies the inclusion property. The inclusion property can be seen as a downward closure in regards of function restriction.

In order to further illustrate the definition, we present a motivating example from model theory.

\begin{example}
	Let $S = (A, R_1,...)$ be a finite relational structure. The set $M$ of all partial isomorphisms on $S$ forms a \emph{modeloid}.
\end{example}
The name modeloid originates from the above example since $ S $ is also called a model. For further motivation, background information and details on modeloids, we refer to Benda's paper~\cite{Benda1979}; a nice example in there is the construction of a Scott Sentence presented through modeloidal glasses \cite[p. 82]{Benda1979}. We, on the other hand, turn to the core concept of the derivative which is defined in the following way. For convenience we represent a partial bijection as a set of tuples.

\begin{definition}[Derivative]\label{def: derivative}
	Let $M$ be a modeloid on $\Sigma$. Then the derivative $D(M) \subseteq F(\Sigma)$ is defined by
	\begin{multline*}
	\{(x_1,y_1),..,(x_{n},y_{n})\} \in D(M) \logeq \\
	\forall a \in \Sigma \, \exists b \in \Sigma: \{(x_1,y_1),..,(x_{n},y_{n}),(a,b)\} \in M \, \wedge \\
	\forall a \in \Sigma \, \exists b \in \Sigma: \{(x_1,y_1),..,(x_{n},y_{n}),(b,a)\} \in M
	\end{multline*}
\end{definition}

A derivative $D(M)$ is thus a set which only contains partial bijections that can be extended by an arbitrary element from $\Sigma$ and which then still belong to $M$. This extension can take place either in the domain or in the range of the function. The next two results \cite[Prop 2.3]{Benda1979} provide some insight into why modeloids and the derivative operation are in harmony.

\begin{lemma}
	Let $M$ be a modeloid on $\Sigma$ and $D(M)$ the derivative. Then we have that $D(M) \subseteq M$.
\end{lemma}

\begin{proposition}
	If $M$ is a modeloid then so is $D(M)$.
\end{proposition}

The importance of these results is essentially due to the fact that they enable us to apply the derivative repeatedly.

\section{Inverse Semigroups and Modeloids}\label{sc: InverseSemigroupAndModeloid}

In this section we show how the Wagner-Preston representation theorem justifies our generalization of a modeloid to inverse semigroup language. We also discuss how well proof automation performs in the context of inverse semigroups. Some familiarity with the Isabelle/HOL proof assistant~\cite{nipkow2002isabelle,Sledgehammer} is assumed.

\subsection{Inverse Semigroups in Isabelle/HOL}

We start with the equational definition of an inverse semigroup.

\begin{definition}[Inverse semigroup \cite{Howie1995}]\label{def: invSemi}
	Let $S$ be a set equipped with the binary operation $*:S \times S \to S$ and 
	the unary operation $a \mapsto a^{-1}$. $(S,^{-1},*)$ is called an inverse semigroup 
	if, and only if, it satisfies the axioms
	\begin{enumerate}
		\item $(x * y) * z = x * (y * z)$ for all $x,y,z \in S$,  
		\item $x * x^{-1} * x = x$ for all $x \in S$, 
		\item $(x^{-1})^{-1} = x$ for all $x \in S$ and
		\item $x * x^{-1} * y * y^{-1} = y * y^{-1} * x * x^{-1}$ for all $x,y \in S$
	\end{enumerate}
\end{definition}


An inverse in semigroup theory is a generalization of the known group theoretical definition. This generalized definition does not depend on a specified unique neutral element. Intuitively, it can be thought of as the inverse map of a partial bijection.

\begin{definition}[Inverse]
	Let $(S, *)$ be a semigroup and $x \in S$. Then $y \in S$ is called an inverse of $x$ if, and only if, $ x * y * x = x \text{ and } y * x * y = y$. 
\end{definition}

We encode an inverse semigroup as follows in Isabelle/HOL. 

{\flushleft
\includegraphics[width=\textwidth]{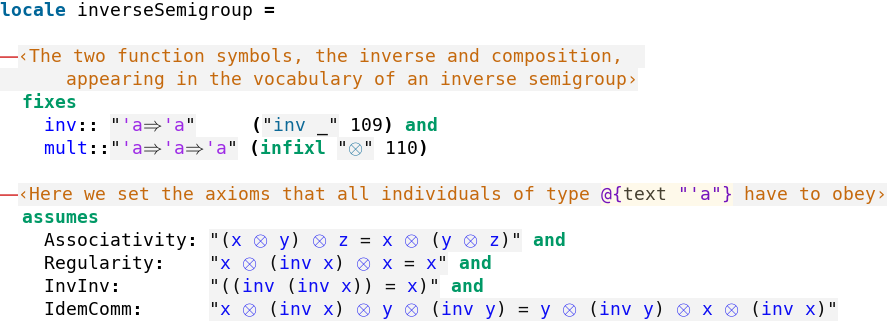}
}

The domain for individuals is chosen to be $'a$, which is a type variable. This means we have encoded a polymorphic version of inverse semigroups.

Using this implementation almost all results needed for proving the Wagner-Preston representation theorem, which we will discuss shortly, can be found by automated theorem proving. Occasionally, however, some additional lemmas to the ones usually presented in a textbook (e.g.~\cite{lawson1998})  are needed. By automated theorem proving we here mean the use of \emph{sledgehammer}~\cite{Sledgehammer} for finding the proofs of the given statements without any further interaction. Regarding equivalent definitions of an inverse semigroup, we were able to automate the proofs of the following theorem (except for $2. \Rightarrow 1.$, which is due to a Skolemization issue).


\begin{theorem}[\cite{lawson1998}] \label{thm: defSemi}
	Let $(S,*)$ be a semigroup. Then the following are equivalent:
	\begin{enumerate}
		\item There is $^{-1}: S \to S$ such that $(S,^{-1},*)$ is an inverse semigroup.
		\item Every element of $S$ has a unique inverse.
		\item Every element of $ S $ is regular, meaning $\forall x \in S \, \exists y \in S: x*y*x = x$, and idempotents in $S$ commute. 
	\end{enumerate}
\end{theorem}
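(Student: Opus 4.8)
The plan is to prove the equivalence of the three characterizations of inverse semigroups by establishing the cycle of implications $1 \Rightarrow 2 \Rightarrow 3 \Rightarrow 1$, since this is the most economical route and each arrow captures a genuinely distinct piece of structure. Throughout I would work purely equationally, as the Isabelle/HOL encoding suggests, so that the individual steps reduce to algebraic manipulations that \emph{sledgehammer} can be expected to discharge.

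\medskip

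First I would tackle $1 \Rightarrow 2$. Given that $(S,{}^{-1},*)$ satisfies the four inverse-semigroup axioms, the map $a \mapsto a^{-1}$ already provides an inverse of each $x$ in the sense of the \emph{Inverse} definition: axiom~2 gives $x * x^{-1} * x = x$ directly, and the dual identity $x^{-1} * x * x^{-1} = x^{-1}$ follows by substituting $x^{-1}$ for $x$ in axiom~2 and using axiom~3 that $(x^{-1})^{-1} = x$. So existence is immediate; the work is \emph{uniqueness}. The standard argument is that if $y$ and $z$ are both inverses of $x$, then using the commutativity of the idempotents $x*y$, $y*x$, $x*z$, $z*x$ (these are idempotents because, e.g., $(x*y)*(x*y) = x*(y*x*y) = x*y$), one derives $y = y*x*y = \dots = z$ by inserting $x*z*x = x$ and $x*y*x = x$ at the right places and shuffling idempotents past one another via axiom~4. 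This is the computational heart of the first implication, and it is exactly the kind of equational chain where automation shines once the idempotent-commutation facts are supplied as lemmas.

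\medskip

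Next, $2 \Rightarrow 3$. Assuming unique inverses, regularity is trivial: the unique inverse $y$ of $x$ satisfies $x*y*x = x$ by definition. The substantive part is showing that idempotents commute. The classical technique is to let $e, f$ be idempotents and show that $e*f$ is idempotent by exhibiting that $e*f$ is its own inverse; more precisely, one shows $(f*e)$ is an inverse of $(e*f)$ and, by a symmetric computation, that $(e*f)$ is also an inverse of $(e*f)$, so uniqueness forces the product of idempotents to be idempotent, and from there a short argument yields $e*f = f*e$. I expect this to be the main obstacle, both mathematically and for automation: it is delicate because it relies essentially on the uniqueness hypothesis rather than on a fixed inverse operation, which is precisely why the authors report a Skolemization difficulty with the reverse direction $2 \Rightarrow 1$; extracting a function from a $\forall\exists$ uniqueness statement is awkward for the first-order backends.

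\medskip

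Finally, $3 \Rightarrow 1$. Given that every element is regular and idempotents commute, I would \emph{define} the unary operation by choosing, for each $x$, an element $x^{-1}$ witnessing regularity, and then verify that this choice satisfies all four axioms of Definition~\ref{def: invSemi}. Axiom~1 is the ambient associativity of the semigroup; axiom~2 is regularity itself; axiom~4 is immediate once one checks that $x*x^{-1}$ and $y*y^{-1}$ are idempotents, whence they commute by hypothesis. The only genuinely nontrivial point is axiom~3, $(x^{-1})^{-1} = x$, which requires first showing that the chosen $x^{-1}$ is a true two-sided inverse (i.e.\ also satisfies $x^{-1}*x*x^{-1} = x^{-1}$) and that such an inverse is unique given idempotent commutation — the same uniqueness lemma resurfacing here. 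Because this direction again needs a choice of witness, I anticipate it shares the Skolemization flavour of the problematic arrow, so in the Isabelle development I would isolate the uniqueness-of-inverse result as a standalone lemma and feed it explicitly to the automation rather than hoping the provers rediscover it inline.
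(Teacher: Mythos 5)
One remark before the substance: the paper itself contains no proof of this theorem. It is quoted from Lawson \cite{lawson1998}, the appendix (which supplies proofs for the paper's other claims) skips it, and the paper's only comment on it is which implications \emph{sledgehammer} could discharge automatically (all except $2 \Rightarrow 1$). So your proposal has to stand on its own, and while its skeleton---a cycle $1 \Rightarrow 2 \Rightarrow 3 \Rightarrow 1$ with a reusable uniqueness-of-inverses lemma---is the standard one, two of your arrows have genuine gaps. In $2 \Rightarrow 3$ your stated route is circular: to verify that $f*e$ is an inverse of $e*f$ you must check $(e*f)*(f*e)*(e*f) = e*f$, but the left side simplifies (using only idempotence of $e$ and $f$) to $(e*f)*(e*f)$, so this verification already presupposes that $e*f$ is idempotent, which is exactly what you set out to prove. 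The classical argument needs a witness you never introduce: let $x$ be the unique inverse of $e*f$ and consider $f*x*e$. One checks directly that $f*x*e$ is an inverse of $e*f$, so $f*x*e = x$ by uniqueness; then $x*x = f*x*(e*f)*x*e = f*x*e = x$, so $x$ is idempotent, hence its own unique inverse; since ``is an inverse of'' is a symmetric relation, $e*f$ is an inverse of $x$, forcing $e*f = x$. Only now is $e*f$ known to be idempotent, and only now do both $f*e$ and $e*f$ become inverses of $e*f$, so that uniqueness yields $e*f = f*e$.

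In $3 \Rightarrow 1$ the first step fails as stated: you define $x^{-1}$ to be an arbitrarily chosen regularity witness and then propose to ``show'' that this witness is a two-sided inverse, but a regularity witness need not be one. Concretely, in $F(\Sigma)$ take $x = id_{\{a\}}$ and $y = id_{\{a,b\}}$: then $x \circ y \circ x = x$, yet $y \circ x \circ y = id_{\{a\}} \neq y$ (and this example lives inside an inverse semigroup, so no amount of idempotent commutation rescues the claim). The standard repair is to replace the chosen witness $y$ by $y*x*y$, which \emph{is} a two-sided inverse of $x$; with that change, and with uniqueness established first so that the choice is coherent, your verification of axioms 3 and 4 does go through. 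Finally, a smaller but real gap in $1 \Rightarrow 2$: you shuffle the idempotents $x*y$, $y*x$, $x*z$, $z*x$ past one another ``via axiom 4'', but axiom 4 only asserts commutation for idempotents of the special form $a*a^{-1}$. You first need the lemma that every idempotent $e$ of an equationally presented inverse semigroup satisfies $e^{-1} = e$---derivable by applying axiom 4 to the pair $(e, e^{-1})$ and comparing with axiom 2---so that every idempotent acquires that special form; without it, the commutations your computation relies on are unjustified.
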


Our experiments confirm that automated theorem proving (and also model finding) can well support the exploration of an axiomatic theory as presented. However, the intellectual effort needed to model and formulate the presented mathematics in the first place is of course still crucial, and a great deal of work has gone into this intuitive aspect of the development process. A more technical challenge also is to find suitable intermediate steps that can be proven automatically by \textit{sledgehammer}.

\subsection{Modeloid as Inverse Semigroup}

We now show that every modeloid $M \subseteq F(\Sigma)$ under partial composition is an inverse semigroup. We make use of Theorem \ref{thm: defSemi} by using the third characterization. For this task regard $(M,\circ)$ as a semigroup. This is clear since composition of partial functions is associative. Since the partial identities of $M$ are exactly the idempotent elements in $(M,\circ)$, commutativity is ensured by referring to the next proposition. Furthermore, also by using the next proposition, the closure of taking inverses required by a modeloid implies regularity for all elements in $ M $. Hence, $(M,^{-1},\circ)$ is an inverse semigroup.


\begin{proposition}[\cite{lawson1998}] \label{prop: PropertiesPartBij}
	Let $f: X \to Y$ be a partial bijection.
	\begin{enumerate}
		\item For a partial bijection $g: Y \to X$, the equations $f = fgf$ and $g = gfg$ hold if, and only if, $g = f^{-1}$
		\item $1_A1_B = 1_{A \cap B} = 1_B1_A$ for all partial identities $1_A$ and $1_B$ where $A,B \subseteq X$
	\end{enumerate}
\end{proposition}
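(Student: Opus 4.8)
The plan is to treat both parts by direct set-theoretic reasoning on the partial bijections, viewed as sets of pairs as in Definition \ref{def: derivative}, while keeping careful track of the domain conditions imposed by partial composition. Throughout I fix the convention $(g \circ f)(x) = g(f(x))$ from the preliminary definitions, so that in a product such as $fgf$ the rightmost factor is applied first and each composite is defined only where the intermediate values land in the appropriate domains.

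For part 1, the forward implication $g = f^{-1} \Rightarrow f = fgf \wedge g = gfg$ is a routine computation: for $x \in \domain{f}$ with $f(x) = y$ one has $f^{-1}(y) = x$, and unwinding $fgf$ at $x$ gives $f(f^{-1}(f(x))) = f(x)$, so $f = f f^{-1} f$; the equation $f^{-1} = f^{-1} f f^{-1}$ is symmetric. The substantial direction is the converse. Assuming $f = fgf$ and $g = gfg$, I would prove $g = f^{-1}$ by establishing the two inclusions of pairs. For $f^{-1} \subseteq g$: take any $x \in \domain{f}$ and set $y = f(x)$; evaluating $f = fgf$ at $x$ forces $y \in \domain{g}$, $g(y) \in \domain{f}$ and $f(g(y)) = y = f(x)$, so injectivity of $f$ yields $g(y) = x$, i.e.\ $(y,x) \in g$. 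For $g \subseteq f^{-1}$: take $y \in \domain{g}$ and set $x = g(y)$; evaluating $g = gfg$ at $y$ forces $x \in \domain{f}$, $f(x) \in \domain{g}$ and $g(f(x)) = x = g(y)$, so injectivity of $g$ yields $f(x) = y$, i.e.\ $(y,x) \in f^{-1}$. The two inclusions give $g = f^{-1}$.

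For part 2, I would compute directly from the definition of partial composition. Since $1_A$ is the identity on its domain $A$ and $1_B$ the identity on $B$, the composite $1_A 1_B$ is defined exactly on those $x$ with $x \in B$ and $1_B(x) = x \in A$, that is on $A \cap B$, where it acts as the identity; hence $1_A 1_B = 1_{A \cap B}$. The same computation with the roles of $A$ and $B$ exchanged gives $1_B 1_A = 1_{B \cap A}$, and commutativity of intersection closes the chain of equalities.

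The main obstacle is really just the bookkeeping in part 1: each application of an equation such as $f = fgf$ must be read as simultaneously certifying membership in the relevant domains (so that the partial composite is actually defined at the point in question) and fixing the value of the composite, with injectivity invoked at exactly the right moment. No deeper machinery is needed; in particular the argument does not presuppose that $F(\Sigma)$ is an inverse semigroup, which matters because this proposition is one of the ingredients later used to establish precisely that.
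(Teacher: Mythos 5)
Your proof is correct, but there is nothing in the paper to compare it against: Proposition~\ref{prop: PropertiesPartBij} is imported from Lawson's textbook \cite{lawson1998} and is one of the few statements for which the appendix supplies no proof (the proposition counter there is simply advanced past it). Judged on its own, your argument is sound and complete. The forward direction of part~1 is indeed routine, and your converse correctly extracts from the equality $f = fgf$ of \emph{partial} functions the three facts that matter — $f(x) \in dom(g)$, $g(f(x)) \in dom(f)$, and $f(g(f(x))) = f(x)$ — before invoking injectivity of $f$ to get $f^{-1} \subseteq g$, and symmetrically uses $g = gfg$ with injectivity of $g$ to get $g \subseteq f^{-1}$; both equations are genuinely needed (with $f$ empty and $g$ nonempty, $f = fgf$ holds but $g \neq f^{-1}$), and you use each exactly where it is required. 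Part~2 is the right one-line computation from the definition of partial composition. The one point worth emphasizing, which you already flag, is that reading $f = fgf$ as an equality of sets of pairs is what licenses the domain bookkeeping: $dom(f) = dom(fgf)$ is where the membership conditions come from. Your closing remark is also well taken — since this proposition is used in Section~\ref{sc: InverseSemigroupAndModeloid} to show that a modeloid is an inverse semigroup (via Theorem~\ref{thm: defSemi}), a proof of it must not presuppose that structure, and yours does not.
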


Not only is every modeloid an inverse semigroup, but by the Wagner-Preston representation theorem also every inverse semigroup can be faithfully embedded into $F(\Sigma)$, which is itself a modeloid. This motivates the idea of  formulating the axioms for a modeloid in inverse semigroup language. Our aim is to restate the derivative operation in this context. In order to achieve this, we shall translate the axioms from Definition \ref{def: Modeloid}, examining them one by one.

\begin{enumerate}
	\item Closure of Composition: Because of the embedding, the composition of 
	partial functions will simply be the $*$-operation in an inverse semigroup. 
	\item Closure of taking inverses:  By Theorem \ref{thm: defSemi} an inverse semigroup is such
	that the inverse exists for every element and is unique, hence resembling the inverses
	of partial functions and in particular the closure property.
	\item The inclusion property: Here it is not immediately apparent how this can be expressed
	within an inverse semigroup. We shall see that the \emph{natural partial order} is capable of doing that.
	\item The identity on $\Sigma$: The identity is a certain idempotent element in an inverse
	semigroup. It will lead us to the notion of an \emph{inverse monoid}.
\end{enumerate}

It is Axiom 3 that we focus our attention on next. We define the \emph{natural partial order} and present the Wagner-Preston representation theorem, which establishes a connection to function restriction in $F(\Sigma)$. We introduce notation for such a restriction. For two partial functions $f,g$ we write $g \subseteq f$ to say that $dom(g) \subseteq dom(f)$ and $\forall x \in dom(g): g(x) = f(x)$.


\begin{definition}[Natural partial order]
	Let $\Sigma = (\Sigma,^{-1},*)$ be an inverse semigroup and $s,t \in \Sigma$. Then we define for some idempotent $e \in \Sigma$ \[ s \, \textbf{$\leq$} \, t \logeq s = t*e. \]
\end{definition}

\begin{theorem}[Wagner-Preston representation theorem \cite{lawson1998}] \label{thm: wagner-preston}
	Let $\Sigma = (\Sigma,^{-1},*)$ be an inverse semigroup. Then there is an injective homomorphism $\Omega: \Sigma \to F(\Sigma)$, such that for $a,b \in \Sigma$ we have
	$ a \, \textbf{$\leq$} \, b \Longleftrightarrow \Omega(a) \subseteq \Omega(b)$. 
\end{theorem}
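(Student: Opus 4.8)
The plan is to realize each element of the inverse semigroup as a concrete partial bijection of the underlying set $\Sigma$ by letting it act through left translation, restricted to the principal ideal cut out by its domain idempotent. Concretely, for $a \in \Sigma$ I would put $e_a := a^{-1}*a$, observe that $e_a$ is idempotent and that the principal ideal $e_a\Sigma$ equals $\{x \in \Sigma : e_a*x = x\}$, and then define $\Omega(a)$ to be the map $x \mapsto a*x$ with domain $e_a\Sigma$. The first task is to confirm $\Omega(a) \in F(\Sigma)$, i.e.\ that it is a partial bijection: its image lands in $(a*a^{-1})\Sigma$ (using $a*a^{-1}*a = a$), it is injective because every $x$ in its domain satisfies $a^{-1}*a*x = x$, so $a*x = a*y$ yields $x = y$ after multiplying by $a^{-1}$, and it maps $e_a\Sigma$ onto $(a*a^{-1})\Sigma$, with $a^{-1}*y$ serving as the preimage of $y$. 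Along the way I would record that $\Omega(a^{-1})$ is the inverse partial bijection of $\Omega(a)$, which matches Proposition~\ref{prop: PropertiesPartBij}(1).

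Next I would verify that $\Omega$ is a semigroup homomorphism, $\Omega(a*b) = \Omega(a) \circ \Omega(b)$. Pointwise this is immediate from associativity, since wherever defined both sides send $x$ to $a*b*x$. The delicate point is that partial composition is defined only where the image of $\Omega(b)$ meets the domain of $\Omega(a)$, so I must check that the resulting domain is exactly $(a*b)^{-1}*(a*b)\,\Sigma = \mathrm{dom}(\Omega(a*b))$. Unwinding the definitions, $x$ lies in the domain of the composite iff $b^{-1}*b*x = x$ and $a^{-1}*a*(b*x) = b*x$, that is, iff $x$ lies in the intersection of the principal ideals generated by the idempotents $b^{-1}*b$ and $b^{-1}*(a^{-1}*a)*b$. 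This is the step that genuinely uses the inverse-semigroup structure: because idempotents commute (axiom~4, equivalently Theorem~\ref{thm: defSemi}(3)), this intersection is the principal ideal generated by the product of the two idempotents, which collapses to $(a*b)^{-1}*(a*b)$. I expect this domain bookkeeping to be the main obstacle; everything else reduces to routine rewriting with $a*a^{-1}*a = a$.

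For injectivity of $\Omega$ I would assume $\Omega(a) = \Omega(b)$. Equality of partial functions forces equality of domains, hence $a^{-1}*a = b^{-1}*b$, since two principal ideals coincide precisely when their idempotent generators do. Evaluating both maps at the element $a^{-1}*a$, which lies in the common domain, then gives $a = a*(a^{-1}*a) = \Omega(a)(a^{-1}*a) = \Omega(b)(a^{-1}*a) = b*(b^{-1}*b) = b$, so $\Omega$ is injective.

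Finally, for the order claim $a \le b \Leftrightarrow \Omega(a) \subseteq \Omega(b)$, recall that $a \le b$ means $a = b*e$ for some idempotent $e$. For the forward direction I would derive from $a = b*e$ the two facts $a^{-1}*a = b^{-1}*b*e$, which yields $b^{-1}*b*(a^{-1}*a) = a^{-1}*a$ and hence the domain inclusion $e_a\Sigma \subseteq e_b\Sigma$, and $b*(a^{-1}*a) = a$, which shows that the two translations agree on $e_a\Sigma$ since $b*x = b*a^{-1}*a*x = a*x$ for every $x$ in $e_a\Sigma$; together these give $\Omega(a) \subseteq \Omega(b)$. Conversely, if $\Omega(a) \subseteq \Omega(b)$, then $a^{-1}*a \in \mathrm{dom}(\Omega(b))$ and the maps agree there, so $a = \Omega(a)(a^{-1}*a) = \Omega(b)(a^{-1}*a) = b*(a^{-1}*a)$, exhibiting $a$ as $b$ multiplied by the idempotent $a^{-1}*a$, i.e.\ $a \le b$. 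This closes the argument.
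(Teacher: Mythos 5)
Your proof is correct, but there is nothing in the paper to compare it against: the Wagner--Preston theorem is quoted from the literature (\cite{lawson1998}), and the paper's appendix deliberately skips it (its proposition/theorem counters jump over the cited results). Your construction --- $\Omega(a)$ as left translation $x \mapsto a*x$ on the principal ideal $(a^{-1}*a)\Sigma$, with commuting idempotents carrying the domain bookkeeping in $\Omega(a*b)=\Omega(a)\circ\Omega(b)$, injectivity via equality of domain idempotents, and the order equivalence checked by evaluating at $a^{-1}*a$ --- is precisely the standard argument from the cited source (up to the immaterial left/right translation convention), and every step you outline goes through.
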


From this theorem it is clear what we mean by a faithful embedding of an inverse semigroup into the set of partial bijections $F(\Sigma)$.
Faithfulness corresponds to the fact that the \emph{natural partial order} in light of the representation theorem is equivalent to the partial order which function restriction defines.
This nicely opens up the possibility to capture the essence of the inclusion property from Definition \ref{def: Modeloid} by the \emph{natural partial order}. Let $M \subset S$, where $S$ is an inverse semigroup. Then the inclusion property can be stated as
\begin{equation} \label{eq: incPropSemi}
\forall f \in M \, \forall g \in S:  g \leq f \Longrightarrow g \in M.
\end{equation}

By setting $S = F(\Sigma)$, the dependency of $M$ on $F(\Sigma)$ can be seen explicitly. In the abstract formulation of a modeloid we will keep this subset property.
It is immediate that a modeloid, seen as an inverse semigroup, fulfills (\ref{eq: incPropSemi}) by the following proposition.

\begin{proposition} \label{prop: <subsetEq}
	Let $M$ be a modeloid on $\Sigma$. Then, for $f,g \in M$, we have  \[ g \leq f \Longleftrightarrow g \subseteq f \]
\end{proposition}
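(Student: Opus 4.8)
The plan is to unwind both sides of the equivalence into concrete statements about partial bijections, exploiting the fact that in $(M,\circ)$ the idempotents are precisely the partial identities $1_A$ with $A \subseteq \Sigma$. First I would record this characterisation of idempotents: if $e \in M$ satisfies $e \circ e = e$, then $e(e(x)) = e(x)$ on $\domain{e}$, and injectivity of $e$ forces $e(x) = x$ there, so $e = 1_{\domain{e}}$; conversely each $1_A$ is plainly idempotent. The second ingredient is the key computation $f \circ 1_A = f|_{A \cap \domain{f}}$, which follows directly from the definition of partial composition, since $\domain{1_A} = \codomain{1_A} = A$ and $1_A$ acts as the identity there. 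With these two facts in hand the equivalence becomes almost mechanical.

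For the forward direction, I would assume $g \leq f$, so that $g = f \circ e$ for some idempotent $e \in M$. Writing $e = 1_A$ and applying the computation above gives $g = f|_{A \cap \domain{f}}$. Hence $\domain{g} = A \cap \domain{f} \subseteq \domain{f}$ and $g(x) = f(x)$ for every $x \in \domain{g}$, which is exactly the assertion $g \subseteq f$. This direction does not even require that $e$ lie in $M$ rather than merely in $F(\Sigma)$, so it is robust.

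For the backward direction, I would assume $g \subseteq f$, i.e.\ $\domain{g} \subseteq \domain{f}$ and $g$, $f$ agree on $\domain{g}$, and take the partial identity $e := 1_{\domain{g}}$ as witness. The point that requires care — and the main obstacle of the proof — is to guarantee that this witness actually lies in $M$, since the natural partial order on $M$ quantifies over idempotents of $M$, not merely of $F(\Sigma)$. This is supplied by the modeloid closure axioms together with Proposition \ref{prop: PropertiesPartBij}: from $g \in M$ we obtain $g^{-1} \in M$ and then $g^{-1} \circ g \in M$, and $g^{-1} \circ g = 1_{\domain{g}}$ because $g$ is injective. Finally, since $\domain{g} \subseteq \domain{f}$, step two yields $f \circ 1_{\domain{g}} = f|_{\domain{g} \cap \domain{f}} = f|_{\domain{g}} = g$, where the last equality uses $g \subseteq f$. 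Thus $g = f \circ e$ with $e$ an idempotent of $M$, that is $g \leq f$, which completes the equivalence.
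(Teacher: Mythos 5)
Your proof is correct, and its overall skeleton matches the paper's: both directions hinge on the witness $1_{\mathrm{dom}(g)}$ and the computation $f \circ 1_A = f|_{A \cap \mathrm{dom}(f)}$. The genuine difference lies in the step you correctly single out as the main obstacle, namely showing that the witness idempotent actually belongs to $M$. The paper gets this from the modeloid axioms 3 and 4: $id_{\mathrm{dom}(g)} \subseteq id_\Sigma$, $id_\Sigma \in M$, so the inclusion property puts $id_{\mathrm{dom}(g)}$ in $M$. You instead observe that $1_{\mathrm{dom}(g)} = g^{-1} \circ g$, which lies in $M$ by closure under inverses and composition alone (axioms 1 and 2). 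Your route is more economical in its hypotheses: it establishes the equivalence $g \leq f \Longleftrightarrow g \subseteq f$ for \emph{any} subset of $F(\Sigma)$ closed under $\circ$ and $^{-1}$, i.e.\ for any inverse subsemigroup of $F(\Sigma)$, without invoking the inclusion property or the identity axiom at all; this is in keeping with the fact that the agreement of the natural partial order with the restriction order is a general inverse-semigroup phenomenon (as in the Wagner--Preston setting), not something special to modeloids. The paper's route, by contrast, exercises exactly the axioms that were just introduced, which is natural in its expository context but proves a strictly weaker-looking fact. One small bookkeeping point: your characterisation of idempotents of $(M,\circ)$ as partial identities (via injectivity) is the same fact the paper delegates to Proposition \ref{prop: PropertiesPartBij}, so nothing is missing there.
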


In a modeloid $M$ the inclusion property implies that the empty partial bijection, which we denote by $0$, is also included in $M$. As a result we want to establish a similar behavior in the generalized modeloid. The deeper reason for this is found in the definition of the derivative operation, because it requires the notion of an atom, which can only be defined if a zero element is present. Seeing $M$ as an inverse semigroup, $0$ is an idempotent element for which the following property holds: $\forall x \in M: 0 * x = 0$. Hence, we will call the idempotent with this property the \emph{zero element}. When defining a modeloid in semigroup language we require the zero element to be part of it.

Turning to Axiom 4, which is $id_\Sigma \in M$, we examine which element of an inverse semigroup $S$ is most suitable for this task. To evaluate, we again look at the modeloid $M$ as an inverse semigroup. In this semigroup, $id_\Sigma$ will be an idempotent $e$ satisfying $\forall x \in M: e * x = x$. Such an element is known as a neutral element in the context of group theory. We require for the inverse semigroup, which we eventually call a modeloid, that $e$ is part of it. What we get is known as an inverse monoid in the literature.

\begin{remark}
	Given an inverse monoid, denoted by $S^1$, and the element $e$ with $e * x = x, \, \forall x \in S^1$. Consider the representation theorem again: this theorem is not guaranteeing uniqueness of the embedding, and in fact there can be several ones. Hence, we cannot assume that $e$ will be mapped to the identity $id_\Sigma$. However, for all idempotent $f \in S^1$ we have that $f \leq e$ because $ f = e * f $ by the assumption about $e$. 
	Hence, $e$ is always the upper bound of all idempotents in $S^1$.  
\end{remark}

We have prepared everything needed for defining a modeloid again. We shall call it a semimodeloid. Note, as mentioned before, that a modeloid is a subset of $F(\Sigma)$ for some non-empty set $\Sigma$ and, as discussed, we keep this subset property to state the inclusion axiom.

\begin{definition}[Semimodeloid]
	Let $S^1 = (\Sigma, \, ^{-1},*,e,0)$ be an inverse monoid. Then $M \subseteq \Sigma$ is called a semimodeloid if, and only if,
	\begin{enumerate}
		\item $\forall x,y \in M: (x * y) \in M$
		\item $\forall x \in M: x^{-1} \in M$
		\item $\forall x \in M \, \forall y \in S^1: y \leq x \Rightarrow y \in M$
		\item $e \in M$
	\end{enumerate}
\end{definition}

\begin{remark}
	A semimodeloid is again an inverse monoid with the zero element. 
\end{remark}

\begin{proposition}
	Every semimodeloid can be faithfully embedded into a modeloid.
	Furthermore, by the considerations above, every modeloid is a semimodeloid.
\end{proposition}


Now we develop the derivative operation in the setting of a semimodeloid. Consider again Definition \ref{def: derivative} in which we have introduced the derivative operation. It is evident that the elements of $\Sigma$ are of crucial importance. Furthermore, we are required to be able to extend the domain of a function by one element at a time. This poses a challenge because in an inverse monoid this information is not directly accessible. But as we shall see, it is possible to obtain.

First we characterize the elements of $\Sigma$. Therefore, consider $F(\Sigma)$ and realize that all the singleton-identities $id_{\{a\}}$, for $a \in \Sigma$,  are in natural bijection to the elements of $\Sigma$. The special property of such a singleton-identity is  
\begin{equation} 
	\forall f \in F(\Sigma): f \subseteq id_{\{a\}} \Rightarrow (f = id_{\{a\}} \vee f = 0) 
\end{equation}
 since $dom(id_{\{a\}}) = \{a\}$. Seeing $F(\Sigma)$ as an inverse monoid with zero element leads to the following definition.
\begin{definition}[Atom]
	Let $S^1$ be an inverse monoid with zero element $0$. Then a non-zero element $x \in S^1$ is an atom if, and only if, \[ \forall f \in S^1: f \leq x \Rightarrow (f = x \vee f = 0) \]
\end{definition}
Our plan is to use the notion of an atom to define the derivative. The next lemma justifies this usage.

\begin{lemma}
	The idempotent atoms in $F(\Sigma)$ are exactly the singleton-identities.
\end{lemma}

This suffices to define the derivative for semimodeloids. We then ensure that the definition matches Definition \ref{def: derivative} if the semimodeloid is on $F(\Sigma)$.

\begin{definition}[Derivative---semimodeloid] \label{def: derivativeSemiMode}
	Let $M$ be a semimodeloid on the inverse monoid $S^1$ with zero element $0$. We define the derivative $D(M)$ of $M$ as 
	\begin{multline*}
	D(M) := 
	\{ f \in M \, | \, \forall \text{ idempotent atoms } a \in S^1 \, \exists x \in M : (f \leq x \, \wedge \, a \leq x^{-1}x) \, \wedge \\
	\forall \text{ idempotent atoms } b \in S^1 \, \exists y \in M : (f \leq y \, \wedge \, b \leq yy^{-1})\}
	\end{multline*}
\end{definition}
If we think about $x$ in the above definition as a partial bijection, then $x^{-1}x$ is the identity on the domain of $x$ and, hence, the condition $a \leq x^{-1}x$ expresses that $a$ is in the domain of $x$. Similarly $b \leq yy^{-1}$ states that $b$ is in the range of $y$.


\begin{proposition}
	The derivative on a modeloid $M$ produces the same result as the semimodeloidal derivative on $M$.
\end{proposition} 


\section{Categorical Axiomatization of a Modeloid}\label{sc: CategoricalAxiomatizationModeloid}

We use an axiomatic approach to category theory based on \emph{free logic} \cite{Scott1967,lambert1960definition,lambert2002free}. As demonstrated by Benzmüller and Scott \cite{J40}, this approach enables the encoding of category theory in Isabelle/HOL. Their encoding work is extended below to cover also inverse categories. Subsequently we formalize modeloids and derivatives in this setting.

\subsection{Category Theory in Isabelle/HOL}\label{subsc: CategoryTheoryInIsabelle}

When looking at the definition of a category $\cat C$, one can realize that the objects $A,B,C,..$ are in natural bijection with the identity morphisms $\idarrow A, \idarrow B, \idarrow C,...$ because those are unique. This enables a characterization of a category just by its morphisms and their compositions, which is used to establish a formal axiomatization. However, in this axiomatic approach we are faced with the challenge of partiality, because the composition between two morphisms $f,g \in \cat C$ is defined if, and only if,
\begin{equation} \label{eq: compPart}
\domain g = \codomain f.
\end{equation}
As a result composition is a partial operation.

An elegant way to deal with this issue is by changing the underlying logic to \emph{free logic}. In free logic an explicit notion of existence is introduced for the objects in the domain that we quantify over. In our case the domain consists of the morphisms of a category. The idea now is to define the composition total, that is,  any two morphisms can always be composed, but only those compositions ``exist'' that satisfy (\ref{eq: compPart}). Because we can distinguish between existing and non-existing morphisms, we are able to formulate statements that take only existing morphisms into account. In this paper we want to work with a unique non-existing morphism which will be denoted by $\star$. Hence a composition of morphisms, that does not satisfy (\ref{eq: compPart}), will result in $\star$.
We refer to Benzmüller and Scott \cite{J40} for more information on the encoding of free logic in Isabelle/HOL.

Based upon this groundwork, a category in Isabelle/HOL is defined as follows.


{\flushleft
\includegraphics[width=\textwidth]{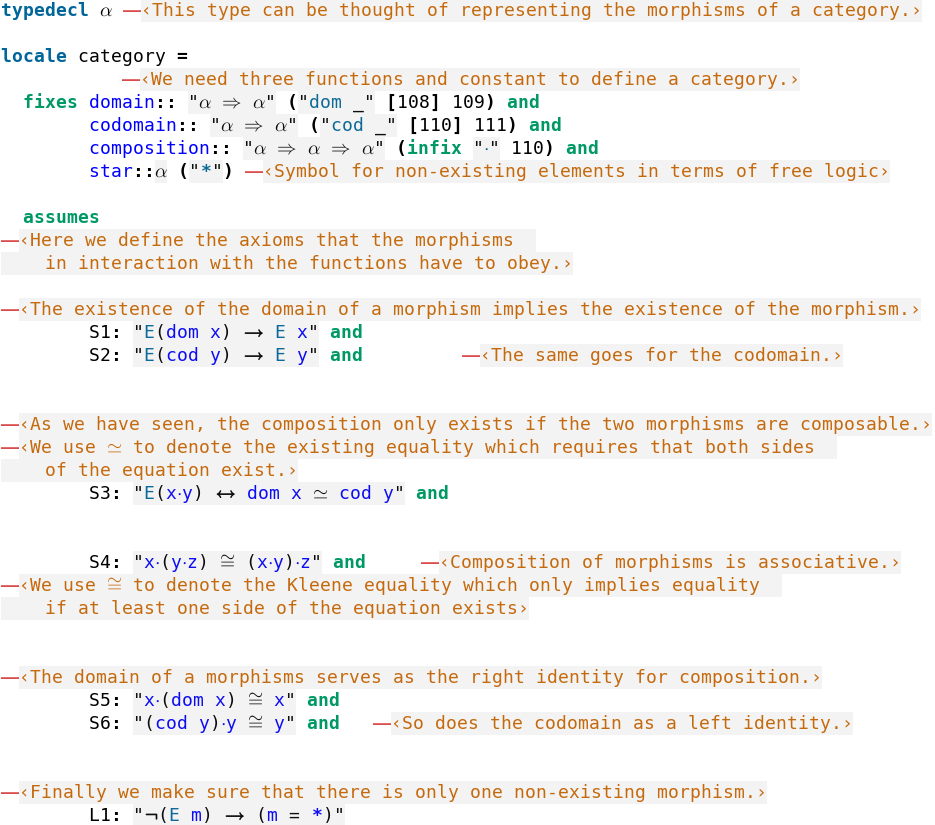}
}

For convenience, we will assume a category to be small for the rest of this paper. As a result, a category for us has only a set of morphisms which satisfies the above axiom schema. This allows us to use notation from set theory. We write $(m: X \to Y) \in \cat C$ to mean that $m$ is a morphism from the category $\cat C$. In addition, it says that $\domain{m} \cong X$ and $\codomain{m} \cong Y$, so $X$ is the domain of $m$ and $Y$ the codomain. The identity morphisms $X$ and $Y$, which are representing objects in the usual sense, are characterized by the property that $X \cong \domain{X} \cong \codomain{X}$, respectively for $Y$. Hence every $c \in \cat C$ satisfying $c \cong \domain{c}$ or $c \cong \codomain{c}$ is representing an object, and we refer to such a morphism as an \emph{object}. 

We want a categorical generalization of an inverse semigroup, so let's turn to the question of how to introduce generalized inverses to a category.
In the above setting we found that by adding the axioms of an inverse semigroup, which are responsible for shaping these inverses (Definition \ref{def: invSemi}, Axioms 2-4), we arrive at a notion that is equivalent to the usual definition of an inverse category. Note that this definition is adopted to our free logic foundation by using \emph{Kleene equality}, which is denoted by $\cong$. We emphasize again that this equality between terms states that, if either term exists, so does the other one and they are equal. 

\begin{definition}[Inverse category \cite{Kastl1979}]
	A small category $\cat C$ is called an inverse category if for any morphism $s: X \to Y \in \cat C$
	there exists a unique morphisms $\hat{s}: Y \to X$ such that $s \cong s \cdot \hat{s} \cdot s$ and 
	$\hat{s} \cong \hat{s} \cdot s \cdot \hat{s}$.
\end{definition}

For the representation in Isabelle/HOL we skolemized the definition.

{\flushleft
\includegraphics[width=\textwidth]{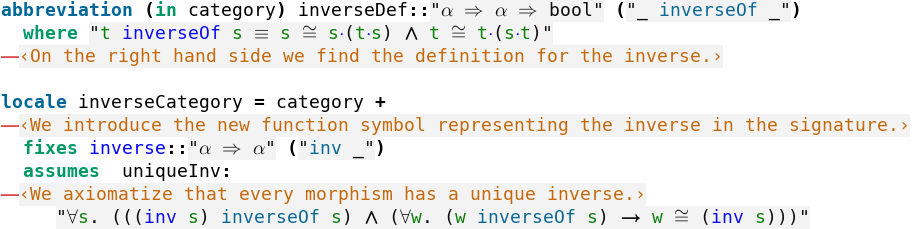}
}

Next,  we see the quantifier free definition.

{\flushleft
\includegraphics[width=\textwidth]{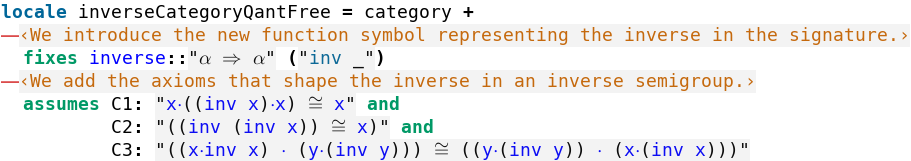}
}

The equivalence between the two formulations has been shown by interactive theorem proving. Again, a significant number of the required subproofs could be automated by \emph{sledgehammer}. In addition, the minimality of the axioms for the quantifier free version above was checked effectively using \emph{sledgehammer} and \emph{nitpick}\footnote{\emph{nitpick}\cite{nitpick} is a counterexample generator for higher-order logic integrated with Isabelle/HOL.}.
Inverse categories are interesting to us because of the following proposition.

\begin{proposition} \label{prop: oneObCatInvSemi}
	Let $\cat C$ be an inverse category with exactly one object. Then $\cat C$ is an inverse semigroup.
\end{proposition}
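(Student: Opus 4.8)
The plan is to reduce the claim to Theorem~\ref{thm: defSemi}, using the implication $2 \Rightarrow 1$: if the morphisms of a one-object inverse category form a semigroup under composition in which every element has a unique inverse, then that semigroup is an inverse semigroup. The crucial gain from routing through Theorem~\ref{thm: defSemi} is that it hands us the commuting-idempotents axiom (Axiom~4 of Definition~\ref{def: invSemi}) for free, so the only algebraic facts I need to establish are totality of composition, associativity, and the unique-inverse property.

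First I would show that composition becomes a \emph{total} binary operation. Fix the unique object and call its identity morphism $X$. Every existing morphism $f$ has $\domain{f}$ and $\codomain{f}$ lying among the objects, and since there is exactly one object these are forced to satisfy $\domain{f} \cong X \cong \codomain{f}$. Consequently, for any two existing morphisms $f,g$ the matching condition~(\ref{eq: compPart}), namely $\domain{g} \cong \codomain{f}$, reduces to $X \cong X$ and therefore always holds; the composite exists, is distinct from the non-existing morphism $\star$, and again has domain and codomain $X$. Writing $S$ for the set of existing morphisms and defining $x * y$ to be their categorical composition, this makes $*$ a total operation on $S$, and $S$ is nonempty since it contains $X$.

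Next I would verify the semigroup and inverse properties on $S$. Associativity of $*$ is inherited directly from the associativity axiom of the category, so $(S,*)$ is a semigroup. For a fixed $s \in S$, the inverse category definition supplies a \emph{unique} morphism $\hat{s}$ with $s \cong s \cdot \hat{s} \cdot s$ and $\hat{s} \cong \hat{s} \cdot s \cdot \hat{s}$; because all the composites involved exist and lie in $S$, these Kleene equalities upgrade to genuine equalities $s = s * \hat{s} * s$ and $\hat{s} = \hat{s} * s * \hat{s}$. Thus $\hat{s}$ is an inverse of $s$ in the semigroup sense, and its uniqueness as a morphism yields uniqueness of the inverse in $S$. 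Every element of $S$ therefore has a unique inverse, and invoking Theorem~\ref{thm: defSemi} ($2 \Rightarrow 1$) produces a unary operation $^{-1}$ (namely $s \mapsto \hat{s}$) making $(S,{}^{-1},*)$ an inverse semigroup, as claimed.

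The main obstacle I anticipate is foundational rather than algebraic: making the free-logic bookkeeping precise. I must argue carefully that ``exactly one object'' genuinely forces all existing morphisms to share $X$ as both domain and codomain, so that no composite silently collapses to $\star$, and that each identity obtained from the inverse category axioms under Kleene equality can be lifted to an ordinary equation among existing elements of $S$. Once totality of composition is secured, everything else is a routine transcription of the category axioms into the semigroup axioms, and the potentially delicate commuting-idempotents condition is avoided entirely by appealing to Theorem~\ref{thm: defSemi}.
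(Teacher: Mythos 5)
Your proposal is correct, and it is worth noting that the paper itself never actually writes out a proof of this proposition: the appendix, despite claiming completeness, skips it (the proposition counter is advanced past it, jumping straight to Proposition~\ref{prop: endoIsSemimod}). The authors evidently regard it as immediate, and the reason is visible in Section~\ref{subsc: CategoryTheoryInIsabelle}: their working (quantifier-free, Skolemized) axiomatization of an inverse category is obtained by adding precisely the inverse-semigroup axioms (Axioms 2--4 of Definition~\ref{def: invSemi}) to the category axioms, so with exactly one object the composition becomes total on the existing morphisms and those axioms specialize verbatim to Definition~\ref{def: invSemi} --- including the commuting-idempotents axiom, which is already present in their axiom set. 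Your route is genuinely different: you start from Kastl's unique-inverse definition, upgrade the Kleene equalities $s \cong s \cdot \hat{s} \cdot s$ and $\hat{s} \cong \hat{s} \cdot s \cdot \hat{s}$ to ordinary equalities on the set $S$ of existing morphisms, and then invoke Theorem~\ref{thm: defSemi} ($2 \Rightarrow 1$) so that commutation of idempotents never has to be checked by hand. What your approach buys is independence from the particular axiomatization: it works directly from the textbook definition of an inverse category and delegates the only nontrivial algebra to a cited theorem. What the paper's implicit approach buys is that there is essentially nothing to prove once one observes totality of composition. Your free-logic bookkeeping (one object forces $\domain{f} \cong X \cong \codomain{f}$ for every existing $f$, hence condition~(\ref{eq: compPart}) always holds, hence composites of existing morphisms exist and never collapse to $\star$; and existential quantifiers in free logic range over existing morphisms, so $\hat{s} \in S$) is exactly the right care to take, and I see no gap in it.
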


This allows us to generalize a semimodeloid to an inverse category by formulating the new axioms in such a way that this categorical construction will collapse to a semimodeloid under the condition of having just one object.

\subsection{Categorical Axiomatization of a Modeloid}\label{subsc: CategoricalModeloid}


The notion of the \emph{natural partial order} is also definable in an inverse category. To state it, we first introduce a definition for \emph{idempotence}.


\begin{definition}[Idempotence]
	Let $\cat C$ be a small category. Then a morphism $e \in \cat C$ is called idempotent if, and only if,
	\[ e \cdot e \cong e. \]
\end{definition}

Whenever we do not assume that both sides of the equation exist, we use \emph{Kleene equality}.

\begin{definition}[Natural partial order \cite{Linckelmann2012}]
	Let $\cat C$ be an inverse category and let $s,t: X \to Y$ be morphisms in $\cat C$. We define
	\[ s \leq t \logeq \exists \text{ idempotent } e \in End_C(X): s \cong t \cdot e \]
\end{definition}
where $End_C(X) := \{m \in \cat C \, | \, m: X \to X  \}$ is called an endoset.

When defining a categorical modeloid $M$ on an inverse category $\cat C$, we will see that for each object $X$ in $\cat C$, $End_C(X)$ is a semimodeloid.
 We require the category to have a zero element in each of its endosets in order to define an atom. For this we simply write that $\cat C$ has all zero elements.

\begin{definition}[Categorical modeloid] \label{def: catMod}
	Let $\cat C$ be an inverse category with all zero elements. Then a \emph{categorical modeloid} $M$ on $\cat C$ is such that $M \subseteq \cat C$ satisfies the following axioms:
	\begin{enumerate}
		\item $a,b \in M \Rightarrow a \cdot b \in M$
		\item $a \in M \Rightarrow a^{-1} \in M$
		\item $\forall \, a \in \cat C \, \forall b \in M: a \leq b \Rightarrow a \in M$
		\item $\forall \, \text {objects }X \in \cat C: X \in M$
	\end{enumerate}
\end{definition}

It is evident that this definition is close by its appearance to a semimodeloid. However, we are now dealing with a network of semimodeloids and have thus reached a much more expressive definition.

\begin{proposition} \label{prop: endoIsSemimod}
	Let $\cat C$ be an inverse category with all zero elements and let $M$ be a categorical modeloid on $\cat C$. Then for each object $X$ in $M$ we get that $End_M(X)$ is a semimodeloid (on itself).
\end{proposition}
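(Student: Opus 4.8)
The plan is, for a fixed object $X \in M$, to exhibit $End_M(X)$ as a semimodeloid on the inverse monoid $End_C(X)$, deducing each of the four semimodeloid axioms from the correspondingly numbered axiom of Definition \ref{def: catMod}. The first task is therefore to certify that $End_C(X)$ really is an inverse monoid with zero.

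First I would note that composition is \emph{total} on the endoset: for $f,g : X \to X$ we have $\codomain{f} \cong X \cong \domain{g}$, so by (\ref{eq: compPart}) the composite $f \cdot g$ exists and again satisfies $f \cdot g : X \to X$, while associativity is inherited from $\cat C$. The full subcategory of $\cat C$ supported on the single object $X$ is thus a one-object category; it is moreover an inverse category, being closed under the inverse operation since $\hat s : X \to X$ whenever $s : X \to X$. Proposition \ref{prop: oneObCatInvSemi} then yields that $End_C(X)$ is an inverse semigroup under $\cdot$ with the unary operation $s \mapsto \hat s$. The identity morphism $X$ is a neutral element, as $X \cdot m \cong m \cong m \cdot X$ for every $m \in End_C(X)$, which upgrades $End_C(X)$ to an inverse monoid; and the hypothesis that $\cat C$ has all zero elements furnishes the zero $0_X \in End_C(X)$, characterised by $0_X \cdot m \cong 0_X \cong m \cdot 0_X$.

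Next I would observe that the natural partial order of the inverse monoid $End_C(X)$ coincides with the categorical natural partial order restricted to $End_C(X)$: both read $s \leq t \logeq \exists \text{ idempotent } e : s \cong t \cdot e$, and for $s,t \in End_C(X)$ the witnessing idempotent $e$ ranges over the idempotents of $End_C(X)$ in either reading, so the two relations are literally the same. With this identification the axioms follow at once. Writing $End_M(X) = M \cap End_C(X)$: if $a,b \in End_M(X)$ then $a \cdot b$ and $a^{-1}$ lie in $M$ by Axioms 1 and 2 of Definition \ref{def: catMod} and are again endomorphisms of $X$, giving semimodeloid Axioms 1 and 2; if $x \in End_M(X)$, $y \in End_C(X)$ and $y \leq x$, then $y \in M$ by Axiom 3 of Definition \ref{def: catMod} and $y \in End_C(X)$ by hypothesis, so $y \in End_M(X)$, which is semimodeloid Axiom 3 (and since $0_X \leq X$ this already forces the zero $0_X$ into $End_M(X)$); finally $X$ is an object, hence $X \in M$ by Axiom 4 of Definition \ref{def: catMod}, so $X \in End_M(X)$, which is semimodeloid Axiom 4.

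The main obstacle I anticipate lies in Step 1 rather than in the essentially clerical Step 3. One has to argue carefully, inside the free-logic foundation, that passing to a single object genuinely eliminates all partiality, so that $End_C(X)$ with $\cdot$ and $s \mapsto \hat s$ is a bona fide inverse semigroup to which Proposition \ref{prop: oneObCatInvSemi} legitimately applies, and that the idempotent supplied by ``all zero elements'' indeed behaves as the semimodeloidal zero, i.e.\ absorbs every element on both sides. Once $End_C(X)$ is secured as an inverse monoid with zero and the two partial orders are matched, each semimodeloid axiom is an immediate translation of the correspondingly numbered categorical modeloid axiom.
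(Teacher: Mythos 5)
Your Step~1 contains a genuine gap, and it is exactly the point your final paragraph worries about. You claim that composition is total on the endoset because $\codomain{f} \cong X \cong \domain{g}$ implies, via (\ref{eq: compPart}), that $f \cdot g$ exists. In the free-logic setting of this paper, Kleene equality to $X$ only yields the \emph{existing} equality $\domain{g} \simeq \codomain{f}$ (which is what existence of the composite requires) when $X$ itself exists, and this fails precisely when $X$ is the non-existing morphism $\star$. Note that $\star$ \emph{is} an object in this framework: $\domain{\star}$ and $\star$ both fail to exist, so the defining Kleene equation $\star \cong \domain{\star}$ holds vacuously; hence Axiom~4 of Definition~\ref{def: catMod} puts $\star \in M$, and the proposition's quantifier ``for each object $X$ in $M$'' includes $X = \star$, where $End_C(\star) = \{\star\}$ and no composition exists. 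Your resolution of the anticipated obstacle --- ``passing to a single object genuinely eliminates all partiality'' --- is false in this case, so Proposition~\ref{prop: oneObCatInvSemi} cannot be invoked as you do. The paper's own proof devotes a full case split to this: first ``assume $X$ exists'', then ``assume $X$ does not exist'', where closure holds because $a \cdot b$, $\domain{a \cdot b}$ and $\codomain{a \cdot b}$ are all non-existing and hence Kleene-equal to $X$. The repair is cheap --- for $X = \star$ one has $End_M(\star) = \{\star\}$, the trivial one-element inverse monoid with zero, which is trivially a semimodeloid on itself --- but as written your argument does not cover it.

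Apart from this, your route is sound and genuinely different from the paper's. You exhibit $End_M(X)$ as a semimodeloid on the \emph{ambient} inverse monoid $End_C(X)$ and translate Axioms 1--4 of Definition~\ref{def: catMod} one by one, whereas the paper proves directly that $End_M(X)$ itself is an inverse monoid (closure of composition inside $M$ with the case analysis above, closure and uniqueness of inverses by restriction, neutral element $X$) and then reads off the semimodeloid axioms. Two remarks on your version: first, the proposition asks for a semimodeloid \emph{on itself}, so you should add one line invoking the Remark that a semimodeloid is again an inverse monoid with zero element, whence $End_M(X)$, being a semimodeloid on $End_C(X)$, is also a semimodeloid on itself. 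Second, your explicit verification that the monoid-theoretic and categorical natural partial orders coincide on $End_C(X)$, and that $0_X \leq X$ forces the zero into $End_M(X)$, are details the paper's proof leaves implicit, and they are worth keeping.
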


\begin{remark}
	Every semimodeloid can easily be seen as a categorical modeloid by the fact that an inverse monoid with zero element is an one-object inverse category.
\end{remark}


We have formulated a generalization of a modeloid in category theory. What is left now is to define the derivative in this context. We will need the notion of a homset and of an atom, which we already introduced for semigroups.

\begin{definition}[Homset]
	Let $\cat C$ be a small category. Then the homset between two elements $X,Y \in \cat C$, satisfying $X \cong dom(X)$ and $Y \cong dom(Y)$, is defined as
	\[ Hom_C(X,Y) := \{ m \in \cat C \, | \, m: X \to Y \} \]
\end{definition}
Hence an endoset is a special case of a homset. We only assume zero elements to be present in endosets and as a result an atom needs to be part of an endoset.

\begin{definition}[Atom]
	Let $\cat C$ be an inverse category with all zero elements. Then an element $a \in End_C(X)$, for some object $X \in \cat C$, is an \emph{atom} if, and only if, the existence of $a$ implies that $a$ is not the zero element and \[ \forall e \in End_C(X): e \leq a \text{ implies that } e \cong a \vee e \cong 0_{End_C(X)}.  \]
\end{definition}


This concludes the preliminaries for defining the derivative on a homset.

\begin{definition}[Derivative---homset]
	Let $\cat C$ be an inverse category with all zero elements and let $M$ be a categorical modeloid on $\cat C$. We define the derivative on $Hom_M(X,Y)$ for $X,Y \in M$ as $D(Hom_M(X,Y)) :=
	\{ f \in Hom_M(X,Y) \, | $\\ 
	$\forall \text{ idempotent atoms } a \in End_M(X) \, \exists h \in Hom_M(X,Y): (f \leq h \wedge a \leq h^{-1}h) \, \wedge$ \\
	$\forall \text{ idempotent atoms } b \in End_M(Y) \, \exists g \in Hom_M(X,Y): (f \leq g \wedge b \leq gg^{-1}) \}$
\end{definition}

\begin{remark}
	Let $\cat C$ be an inverse category with just one object $X$ and a zero element. Then $\cat C$ is an inverse semigroup by Proposition \ref{prop: oneObCatInvSemi} and
	the derivative on the homset $D(Hom_C(X,X))$ is equal to the semimodeloidal derivative  $D(\cat C)$.
\end{remark}

Now the key property of this operation is that it produces a categorical modeloid again if we apply it to \emph{all} homsets simultaneously.


\begin{theorem}
	Let $\cat C$ be an inverse category with all zero elements and let $M$ be a categorical modeloid on $\cat C$. Then
	\[ \bigcup_{X,Y \in M} D(Hom_M(X,Y)) \] is a categorical modeloid on $\cat C$.
\end{theorem}

As a result we define this to be the derivative operation on categorical modeloids.

\begin{definition}[Derivative---categorical modeloid]
	Let $\cat C$ be an inverse category with all zero elements and let $M$ be a categorical modeloid on $\cat C$. Then we set the derivative as
	\[ D(M) := \bigcup_{X,Y \in M} D(Hom_M(X,Y)). \]
\end{definition}

Let $M$ be a categorical modeloid. We define
\begin{equation}
	D^0(M) := M \text{ and } D^{n+1}(M) := D(D^n(M)) 
\end{equation}
for $n \in \mathbb{N}$. $D^m(M)$ thus takes the derivative $m$-times. This notion is used in the next section. 


\section{Algebraic Ehrenfeucht-Fraïssé games}\label{sc: AlgebraicEFGames}

When moving from classical model theory to the finite case, some machinery for proving inexpressibility results in first-order logic, such as the \emph{compactness theorem}, fails. However, Ehrenfeucht-Fraïssé (EF) games are still applicable and, therefore, play a central notion in finite model theory due to the possibility to show that a property is first-order axiomatizable. For more information see \cite{Libkin2013}. 

In this section we explicitly show that derivatives on categorical modeloids generalize EF games.

\subsection{Rules of EF game}

To play an EF game, two finite $\tau$-structures $\mathcal{A}$ and $\mathcal{B}$, where $\tau$ is a finite relational vocabulary, are needed. In general EF games are not restricted to finite structures, but for our purpose we shall only deal with this case. In order to give an intuitive understanding we imagine two players, which we call the spoiler and the duplicator, playing the game. The rules are quite simple. In $n\in \mathbb{N}$ rounds the spoiler tries to show that the two given structures are not equal, while the duplicator tries to disprove the spoiler every time. A round consists of the following:

\begin{itemize}
	\item The spoiler picks either $\mathcal{A}$ or $\mathcal{B}$ and then makes a move by choosing an element from that structure, so $a\in \mathcal{A}$ or $b\in \mathcal{B}$.
	\item After the spoiler is done, the duplicator picks an element of the other structure and the round ends. 
\end{itemize}

Next we define what the winning condition for each round will be. For convenience let $Part(\struc A,\struc B)$ be the set of all partial isomorphisms from $\mathcal{A}$ to $\mathcal{B}$. Furthermore, given a constant symbole $c$ from $\tau$, we denote by $c^{\mathcal{A}}$ the interpretation of $c$ in the structure $\mathcal{A}$.

\begin{definition}[Winning position \cite{Libkin2013}]
	Suppose the EF game was played for $n$ rounds. Then there are moves $(a_1,..,a_n)$ picked from $\mathcal{A}$ and moves $(b_1,..,b_n)$ picked from $\mathcal{B}$. For this to be a winning position we require that for some $r \in \mathbb{N}$ the map
	\[ \{(a_1,b_1),..,(a_n,b_n),(c^{\mathcal{A}}_1, c^{\mathcal{B}}_1),.., (c^{\mathcal{A}}_r, c^{\mathcal{B}}_r) \} \in Part(\struc A, \struc B)\] where the $c_i$ are all constant symbols of $\tau$.
\end{definition}

In order to win, the duplicator needs to defeat the spoiler in every possible course of the game. We say the duplicator has an \emph{n-round winning strategy in the Ehrenfeucht-Fraïssé game on $\mathcal{A}$ and $\mathcal{B}$} \cite{Libkin2013}, if the duplicator is in a winning position after n moves regardless of what the spoiler does. This is made precise by the back-and-forth method due to Fraïssé.

\begin{definition}[Back-and-forth relation \cite{Ebbinghaus1995}] \label{def: backforth}
	
	We define a binary relation $\equiv_m, \, m\in \mathbb{N}$, on all $\tau$-structures by $\mathcal{A} \equiv_m \mathcal{B}$ iff there is a sequence $(I_j)$ for $0 \leq j \leq m$ such that
	
	\begin{itemize}
		\item Every $I_j$ is a non-empty set of partial isomorphisms from $\mathcal{A}$ to $\mathcal{B}$
		\item (Forth property) $\forall j < m$ we have $\forall a \in \mathcal{A}\, \forall f \in I_{j+1} \, \exists g\in I_j: f \subseteq g \, \wedge \, a \in dom(g)$
		\item (Back property) $\forall j < m$ we have $\forall b \in \mathcal{B}\, \forall f \in I_{j+1} \, \exists g\in I_j: f \subseteq g \, \wedge \, b \in cod(g)$
	\end{itemize}
\end{definition}
Hence $\mathcal{A} \equiv_n \mathcal{B}$ means that the duplicator has a $n$-round winning strategy.

\subsection{The derivative and Fraïssé's method}


We relate the categorical derivative to Fraïssé's method which we have just seen. In order to do this, we define a categorical modeloid on the category of finite $\tau$-structures, where $\tau$ is a finite relational vocabulary. For that let $\mathcal{A}$ and $\mathcal{B}$ be two finite $\tau$-structures. Denote by $F(\mathcal{A}, \mathcal{B})$ the set \[ \bigcup_{(X,Y)\in \{\mathcal{A}, \mathcal{B}\}^2} Part(X,Y) \] and let $\stern \not \in F(\mathcal{A}, \mathcal{B})$ be an arbitrary element. Then define $C := F(\mathcal{A}, \mathcal{B}) \cup \{\stern\}$. 

We construct two functions $dom: C\to C$ and $cod:C\to C$ such that for a partial isomorphism $f:X\to Y \in F(\mathcal{A}, \mathcal{B})$ we set $dom(f) = id_X$ and $cod(f) = id_Y$, and for the element $\stern$ we define $dom(\stern) = \stern$ and $cod(\stern) = \stern$. 

Next we define a binary operation $\cdot:C\to C$ by
\[ f \cdot g = 
\begin{cases*}
f \circ g,  & if $dom(f) = cod(g)$ and $f,g \neq \stern$\\
\stern,  & else\\
\end{cases*} \] where $\circ$ denotes the composition of partial functions.

\begin{proposition}
	$\cat D := (C,dom,cod,\cdot,\stern,^{-1})$ is an inverse category where $f^{-1}$ denotes the inverse of each partial isomorphism $f$ and $\stern^{-1} = \stern$. The existing elements are exactly all elements in $F(\mathcal{A}, \mathcal{B})$ and the compositions $f \circ g$ in case $dom(f) = cod(g)$, for $f,g \in F(\mathcal{A}, \mathcal{B})$.
\end{proposition}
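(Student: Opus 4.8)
The plan is to verify directly that $\cat D$ meets the axioms of a (small) category in the free-logic formulation of Section~\ref{subsc: CategoryTheoryInIsabelle}, and then to establish the defining property of an inverse category, namely the existence and uniqueness of generalized inverses. The guiding observation is that $\cat D$ is simply the ``two-object'' analogue of the inverse monoid $F(\Sigma)$ of partial bijections: its objects are $\idarrow{\struc A}$ and $\idarrow{\struc B}$, its existing morphisms are the partial isomorphisms between $\struc A$ and $\struc B$, and $\cdot$ is partial-function composition extended by the absorbing non-existing element $\stern$. Consequently almost every verification reduces to a standard fact about composition of partial functions, together with routine bookkeeping of $\stern$.

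First I would check the category axioms. Smallness is immediate, as $C$ is a set. The maps $\domain{\cdot}$ and $\codomain{\cdot}$ send every morphism to one of $\idarrow{\struc A}, \idarrow{\struc B}, \stern$, and the strictness/existence conditions (such as $\domain{\stern} \cong \stern$, and that $\cdot$ yields $\stern$ as soon as one argument is $\stern$) hold by definition. The identity morphisms $\idarrow{\struc A}, \idarrow{\struc B}$ are exactly the objects, since $f \cong \domain f$ forces $f = \idarrow X$. Composition-existence holds by construction: for $f,g \in F(\struc A, \struc B)$ the composite $f \cdot g$ exists precisely when $\domain f = \codomain g$. The identity laws $f \cdot \domain f \cong f$ and $\codomain f \cdot f \cong f$ hold because $\domain f = \idarrow X$ and $\codomain f = \idarrow Y$ are the full identities on the universes of $X$ and $Y$, which act neutrally under partial composition. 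Associativity $(f \cdot g) \cdot h \cong f \cdot (g \cdot h)$ splits into the case in which all intermediate object-matches succeed---where it is just the associativity of partial-function composition---and the cases in which at least one match fails, where one checks that both bracketings collapse to $\stern$ via the absorption rule.

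Next I would establish the inverse-category property. For an existing morphism $f: X \to Y$ put $\hat f := f^{-1}: Y \to X$, which lies in $Part(Y,X) \subseteq F(\struc A, \struc B)$; since $\codomain{f^{-1}} = \idarrow X = \domain f$ the relevant composites exist, and the standard partial-bijection identities yield $f \cong f \cdot \hat f \cdot f$ and $\hat f \cong \hat f \cdot f \cdot \hat f$. For $\stern$ put $\hat{\stern} := \stern$; both equations then hold trivially by absorption. Uniqueness among existing morphisms is precisely Proposition~\ref{prop: PropertiesPartBij}(1), which characterizes $f^{-1}$ as the only partial bijection $g$ with $f = fgf$ and $g = gfg$; and any candidate inverse of an existing $f$ must itself be existing, since a non-existing candidate would force $f \cong f \cdot \stern \cdot f \cong \stern$. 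Symmetrically, the only element satisfying the two equations for $\stern$ is $\stern$, as the second equation then reads $\hat{\stern} \cong \stern$. The concluding claim about the existing elements follows by unwinding the definition of $\cdot$, using that partial isomorphisms are closed under composition, so that every existing composite already lies in $F(\struc A, \struc B)$.

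The mathematical content is routine; the main obstacle is the careful handling of partiality in the free-logic setting. One must confirm that every case in which a composite ``should'' be undefined genuinely evaluates to $\stern$, and that the Kleene equalities hold uniformly, both when the terms exist and when they do not. Compared with the single-object monoid $F(\Sigma)$, the two objects $\struc A$ and $\struc B$ enlarge the associativity case analysis, since object-mismatches may occur at either composition site; checking that a mismatch under one bracketing forces $\stern$ under the other is the one place genuinely requiring attention. Alternatively, one may replace the explicit uniqueness argument by invoking the equivalence noted in Section~\ref{subsc: CategoryTheoryInIsabelle}---that the category axioms together with Axioms~2--4 of Definition~\ref{def: invSemi} characterize inverse categories.
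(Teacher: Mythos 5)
Your proposal is correct and takes essentially the same route as the paper's own (very terse) proof: verify the category axioms directly from the construction of $dom$, $cod$ and $\cdot$, and derive existence and uniqueness of the generalized inverses from Proposition~\ref{prop: PropertiesPartBij}. Your write-up simply makes explicit the case analysis (absorption by $\stern$, associativity under object mismatches, exclusion of non-existing inverse candidates) that the paper leaves implicit.
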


What we have just seen provides a general procedure for creating categories in our setting, which is founded on a free logic that is itself encoded in Isabelle/HOL.

\begin{corollary}
	$\cat D := (C,dom,cod,\cdot,\stern,^{-1})$ is also a categorical modeloid on itself.
\end{corollary}

\begin{remark}
	Hence we have that every inverse category having a zero element for each of its endosets is also a categorical modeloid and thus admits a derivative. 
\end{remark}


At this point we are able to use the derivative on $\cat D$. The final theorem draws the concluding connection between modeloids and Fraïssé's method. We show that in the established setting, an $m$-round winning strategy between $\struc A$ and $\struc B$ is given by the sets which the derivative produces if applied $m$ times. Note the abuse of notation in the way we are using $\equiv_m$ here.

\begin{theorem}
	Let $M$ be the categorical modeloid $\cat D$. Then
	\[ \exists h:X \to Y \in D^m(M) \text{ with } h \neq \stern \Longleftrightarrow X \equiv_m Y, \quad m \in \mathbb{N} \]
\end{theorem}

As a result the derivative on this modeloid is equivalent to playing an EF game between the two structures. Hence on an arbitrary categorical modeloid the derivative can be seen as a generalization of EF games.

\section{Conclusion}


In this paper we have shown how to arrive at the notion of a categorical modeloid using axiomatic category theory. We started out with a set of partial bijections abstracting from a structure, then we interpreted this set as an inverse semigroup by the embedding due to the Wagner-Preston representation theorem, and, finally, we were able to axiomatize a modeloid in an inverse category. The key feature we employed is the natural partial order which also enabled us to present the derivative operation in each step of abstraction. The categorical derivative on the category of finite structures of a finite vocabulary can then be used to play Ehrenfeucht-Fraïssé games between two structures. As a result a more abstract representation of these games is possible. 


Using our encoding of inverse categories in Isabelle/HOL, we are currently extending this encoding work to cover also categorical modeloids and their derivatives. This naturally extends the framework established by Benzmüller and Scott so far \cite{J40}.
Furthermore, an investigation of the generalized Ehrenfeucht-Fraïssé games in terms of applicability has to be conducted. We believe that the notion of a categorical modeloid will continue to play a role when connecting model theoretical and categorical concepts.

\subsubsection*{\small Acknowledgment.} {\small We wish to thank the anonymous referees for their helpful comments and suggestions.}

\bibliographystyle{splncs04}

\begin{thebibliography}{10}
\providecommand{\url}[1]{\texttt{#1}}
\providecommand{\urlprefix}{URL }
\providecommand{\doi}[1]{https://doi.org/#1}

\bibitem{Benda1979}
Benda, M.: Modeloids. I. Transactions of the American Mathematical Society
  \textbf{250},  47--90 (1979). \doi{10.1090/s0002-9947-1979-0530044-4}

\bibitem{J40}
Benzm{\"u}ller, C., Scott, D.S.: Automating free logic in {HOL}, with an
  experimental application in category theory. J. Autom. Reasoning  (2019).
  \doi{10.1007/s10817-018-09507-7}

\bibitem{Sledgehammer}
Blanchette, J.C., B{\"{o}}hme, S., Paulson, L.C.: Extending sledgehammer with
  {SMT} solvers. J. Autom. Reasoning  \textbf{51}(1),  109--128 (2013).
  \doi{10.1007/s10817-013-9278-5}

\bibitem{nitpick}
Blanchette, J.C., Nipkow, T.: Nitpick: A counterexample generator for
  higher-order logic based on a relational model finder. In: International
  conference on interactive theorem proving. LNCS, vol.~6172, pp. 131--146.
  Springer (2010)

\bibitem{Ebbinghaus1995}
Ebbinghaus, H.D., Flum, J.: Finite Model Theory. Springer Berlin Heidelberg
  (1995). \doi{10.1007/3-540-28788-4}

\bibitem{Howie1995}
Howie, J.M.: Fundamentals of semigroup theory. Oxford University Press (1995)

\bibitem{Kastl1979}
Kastl, J.: Inverse categories. Algebraische Modelle, Kategorien und Gruppoide
  \textbf{7},  51--60 (1979)

\bibitem{lambert1960definition}
Lambert, K.: The definition of e! in free logic. In: Abstracts: The
  International Congress for Logic, Methodology and Philosophy of Science.
  Stanford University Press Stanford (1960)

\bibitem{lambert2002free}
Lambert, K.: Free logic: Selected essays. Cambridge University Press (2002)

\bibitem{lawson1998}
Lawson, M.V.: Inverse Semigroups. {WORLD} {SCIENTIFIC} (nov 1998).
  \doi{10.1142/3645}

\bibitem{Libkin2013}
Libkin, L.: Elements of finite model theory. Springer (2013)

\bibitem{Linckelmann2012}
Linckelmann, M.: On inverse categories and transfer in cohomology. Proceedings
  of the Edinburgh Mathematical Society  \textbf{56}(1),  187--210 (2012),
  \url{https://doi.org/10.1017/s0013091512000211}

\bibitem{nipkow2002isabelle}
Nipkow, T., Paulson, L.C., Wenzel, M.: Isabelle/HOL: a proof assistant for
  higher-order logic, LNCS, vol.~2283. Springer (2002)

\bibitem{Scott1967}
Scott, D.: Existence and description in formal logic. In: Schoenman, R. (ed.)
  Journal of Symbolic Logic, pp. 181--200 (1967)

\bibitem{BachelorTiemens}
Tiemens, L.: Computer-supported Exploration of a Categorical Axiomatization of
  Miroslav Benda's Modeloids. Bachelor's thesis, Freie Universit\"at Berlin
  (2019),
  \url{https://www.mi.fu-berlin.de/inf/groups/ag-ki/Theses/Completed-theses/Bachelor-theses/2019/Tiemens/BA-Tiemens2.pdf}

\bibitem{ARXIV}
Tiemens, L., Scott, D.S., Benzm{\"u}ller, C., Benda, M.: Computer-supported
  exploration of a categorical axiomatization of modeloids. ArXiv
  \textbf{abs/1910.12863} (2019), \url{https://arxiv.org/abs/1910.12863}

\end{thebibliography}

\newpage
\pagebreak
\section*{Appendix}

\setcounter{theorem}{0}
\setcounter{lemma}{0}
\setcounter{proposition}{0}
\setcounter{corollary}{0}

This appendix provides the proofs for all claims we made in the paper. We will go through them in the order in which they appear.

\subsection{Proofs of section \ref{sc: Modeloid}}

\begin{lemma}
	Let $M$ be a modeloid on $\Sigma$ and $D(M)$ the derivative. Then we have that $D(M) \subseteq M$.
\end{lemma}
\begin{proof}
	We want to show $ x \in D(M) \Rightarrow x \in M$. Fix $ x \in D(M) $. Next fix $ a \in \Sigma $. We know we can find $ b \in \Sigma $ such that $ x \cup \{ (a,b) \} \in M$. Since $ M $ is a modeloid, the inclusion property implies $ x \in M $.
\end{proof}

\begin{proposition}
	If $M$ is a modeloid then so is $D(M)$.
\end{proposition}
\begin{proof}
	We will prove the modeloidal axioms for $ D(M) $ in the order in which they appear in the definition of a modeloid.
	
	Hence, we start with the closure of composition. Let $ f,g \in D(M) $. We need to show that for a fixed $ a \in \Sigma $ we can find $ b \in \Sigma $ such that $ (g \circ f) \cup \{(a,b)\} \in M $. The second conjunct of the derivative follows by analogy.
	Therefore, fix $ a \in \Sigma $. We can find $ b_1 \in \Sigma $ such that $ f \cup \{(a,b_1)\} \in M $. Then we can find $ b_2 \in \Sigma $ such that $ g \cup \{(b_1, b_2)\} \in M $. Because $ M $ is closed under the composition, it follows that $ (g \cup \{(b_1, b_2)\}) \circ (f \cup \{(a,b_1)\}) = (g \circ f) \cup \{(a,b_2)\} \in M $. As a result, $ D(M) $ is closed under composition.
	
	Up next is the closure of taking inverses. Let $ f \in D(M) $ and fix $ a \in \Sigma $. Because of the second conjunct of the derivative, we get that for some $ b \in \Sigma $ the statement $ x \cup \{(b,a)\} \in M $ holds. Since $ M $ is a modeloid, this implies that $ x^{-1} \cup \{(a, b)\} \in M $. By analogy we also get $ \forall a \in \Sigma \exists b \in \Sigma: x^{-1} \cup \{(b, a)\} \in M $. Thus $ x^{-1} \in D(M) $.
	
	The inclusion property is evident. Fix $ f \in D(M) $ and a $ g $ satisfying $ g \subseteq f$. \[ \forall a \in \Sigma \exists b \in \Sigma: f \cup \{(a,b)\} \in M\] immediately implies \[ \forall a \in \Sigma \exists b \in \Sigma: g \cup \{(a,b)\} \in M.\] Analogously we get $ \forall a \in \Sigma \exists b \in \Sigma: g \cup \{(b,a)\} \in M$ and hence $ g \in D(M) $.
	
	The fact that $ id_\Sigma \in D(M)$ can be seen by noting $ \forall a \in \Sigma: id_\Sigma \cup \{(a,a)\} = id_\Sigma$ and $ id_\Sigma \in M $. This concludes the proof. 
\end{proof}

\subsection{Proofs of section \ref{sc: InverseSemigroupAndModeloid}}

\setcounter{proposition}{2}

\begin{proposition} \label{prop: <subsetEq}
	Let $M$ be a modeloid on $\Sigma$. Then for $f,g \in M$ \[ g \leq f \Longleftrightarrow g \subseteq f \]
\end{proposition}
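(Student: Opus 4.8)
The plan is to prove both implications directly, exploiting the concrete description of the inverse semigroup $(M,\circ)$ that comes from $M \subseteq F(\Sigma)$. The crucial structural input is the fact already recorded in the text (and which is part of the reasoning behind Proposition~\ref{prop: PropertiesPartBij}(2)): the idempotents of $(M,\circ)$ are exactly the partial identities $1_A = id_A$ with $A \subseteq \Sigma$. I would also pin down the composition convention of the paper, namely $(f \circ h)(x) = f(h(x))$ with domain $h^{-1}(dom(f) \cap cod(h))$, once and for all. Applying this with $h = 1_A$ gives $cod(1_A) = A$ and $1_A^{-1} = 1_A$, so that $f \circ 1_A = f|_{A \cap dom(f)}$; spelling this computation out a single time turns both directions into short domain arguments.

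For the direction $g \subseteq f \Rightarrow g \leq f$, I would set $A := dom(g)$ and take $e := 1_A$ as the witnessing idempotent. First note $e \in M$: since $g \in M$ we have $g^{-1} \in M$ by closure under inverses, and then $g^{-1} \circ g = 1_{dom(g)} = 1_A \in M$ by closure under composition. Because $g \subseteq f$ gives $A \subseteq dom(f)$, the identity above specializes to $f \circ 1_A = f|_A$, and the agreement of $f$ and $g$ on $A = dom(g)$ forces $f|_A = g$. Hence $g = f \circ e$ with $e$ idempotent, i.e. $g \leq f$.

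For the direction $g \leq f \Rightarrow g \subseteq f$, I would start from an idempotent $e \in M$ with $g = f \circ e$, invoke the idempotent/partial-identity correspondence to write $e = 1_A$ for a suitable $A \subseteq \Sigma$, and compute $g = f \circ 1_A = f|_{A \cap dom(f)}$. This immediately yields $dom(g) = A \cap dom(f) \subseteq dom(f)$ together with $g(x) = f(x)$ for every $x \in dom(g)$, which is exactly $g \subseteq f$. Note that here one need not assume $A \subseteq dom(f)$; the intersection with $dom(f)$ is produced automatically by the composition.

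The only genuine obstacle is the bookkeeping around the composition order: one must verify that $f \circ 1_A$ restricts the \emph{domain} of $f$ rather than its codomain, so that the natural partial order matches the restriction order $\subseteq$ and not its reverse. Once this convention is fixed and the idempotents are identified with partial identities, both implications reduce to the elementary domain calculation above, and no input beyond the modeloid closure axioms is required.
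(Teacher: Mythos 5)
Your proof is correct and follows essentially the same route as the paper's: both directions rest on identifying the idempotents of $(M,\circ)$ with partial identities and on the single computation $f \circ 1_A = f|_{A \cap dom(f)}$, which handles the composition-order bookkeeping exactly as the paper does. The only cosmetic difference is how the witnessing idempotent is shown to lie in $M$: you obtain $1_{dom(g)} = g^{-1} \circ g \in M$ from closure under inverses and composition, while the paper gets $id_{dom(g)} \in M$ by restricting $id_\Sigma$ via the inclusion property; both are immediate consequences of the modeloid axioms.
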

\begin{proof}
	Let $\Sigma$ be an alphabet and $M$ a functional modeloid on $F(\Sigma)$. We have already established that $(M, \, ^{-1}, \circ)$ is an inverse semigroup. Fix $f,g \in M$. Supposing that $g \leq f$ holds we know $g = f \circ e$ where $e$ is an idempotent. As such $e$ is a partial identity in $M$. As a result \[ dom(g) = e^{-1}( dom(f) \cap cod(e) ) = dom(f) \cap cod(e) \] and hence $dom(g) \subseteq dom(f)$. Furthermore, $g(x) = (f \circ e)(x) = f(x)$ for $x \in dom(g)$. This yields $g \subseteq f$. \\
	Conversely suppose that $g \subseteq f$. Since $id_{dom(g)} \subseteq id_\Sigma$ we know that $id_{dom(g)} \in M$. In addition, the partial identity is idempotent. Now $f \circ id_{dom(g)} \in M$ and $g = f \circ id_{dom(g)}$ because $dom(g) = dom(f) \cap cod(id_{dom(g)})$ since $dom(g) \subseteq dom(f)$ and $g(x) = f(x) = f(id_{dom(g)}(x))$ for $x \in dom(g)$. As a result $g \leq f$ holds.
\end{proof}

\begin{proposition}
	Every semimodeloid can be faithfully embedded into a modeloid.
	Furthermore, by the considerations above, every modeloid is a semimodeloid.
\end{proposition}
\begin{proof}
	This proposition basically summarizes the work done. By taking the modeloid $ F(\Sigma) $, it is clear that every semimodeloid on $ \Sigma $ can be faithfully embedded into it by the Wagner-Preston representation theorem because the inclusion property only depends on the faithfulness of the embedding with respect to the \emph{natural partial order}.
	Proposition \ref{prop: <subsetEq} establishes that every modeloid is a semimodeloid.
\end{proof}

\begin{lemma}
	The idempotent atoms in $F(\Sigma)$ are exactly the singleton-identities.
\end{lemma}
\begin{proof}
	Idempotent elements in $F(\Sigma)$ are the partial identities. So it suffices to show that $|dom(f)| = 1$ if, and only if, the idempotent $f \in F(\Sigma)$ is an atom. 
	Assume $f \in F(\Sigma)$ is an atom and idempotent. Suppose now that $|dom(f)| > 1$. Then we can find $a,b \in dom(f)$ with $a \not = b$. But then we have that $id_{\{a\}} \subsetneq f$ and $id_{\{a\}} \not = 0$. This is a contradiction to $f$ being an atom. The case $|dom(f) = 0|$ implies that $f = 0$ but an atom is unequal to the zero element. As such that case is also taken care of. \\
	Conversely assume $|dom(f)| = 1$ for some non-zero partial identity $f \in F(\Sigma)$. Then $g \subseteq f$ implies that $dom (g) = \emptyset \vee dom(g) = dom(f)$. If it is the first option we have $g = 0$. And if it is the second we get $g = f$.
\end{proof}

\begin{proposition}
	The derivative on a modeloid $M$ produces the same result as the semimodeloidal derivative on $M$.
\end{proposition} 
\begin{proof}
	Let $M$ be a functional modeloid on $F(\Sigma)$ for an alphabet $\Sigma$. We want to show that the two definitions of the derivative are equivalent in this case. We show that for fixed $f \in M$
	\begin{equation*}
	\begin{gathered}
	\forall a \in \Sigma \, \exists b \in \Sigma: f \cup \{(a,b)\} \in M \\
	\Longleftrightarrow \\
	\forall \text{ idempotent atoms } b \in F(\Sigma) \, \exists x \in M : (f \leq x \, \wedge \, b \leq x^{-1}x)
	\end{gathered}
	\end{equation*}
	because the second part of the definition of the derivative follows by analogy.
	
	To start remember the natural bijection \[ \Psi: \Sigma \simeq \{p \in F(\Sigma) \, | \, p \text{ an atom and idempotent } \}.\]
	Suppose the upper formula holds. Fix $a \in \Sigma$. Then setting $x := f \cup \{(a,b)\}$ yields that $f \leq x$ by proposition \ref{prop: <subsetEq}. Furthermore, $a \in dom(x) = dom(x^{-1}x)$. But then $\Psi(a) \leq x^{-1}x$. Hence we get that $\exists x \in M : (f \leq x \, \wedge \, \Psi(a) \leq x^{-1}x)$. But now we can quantify over $\{p \in F(\Sigma) \, | \, p \text{ an atom and idempotent } \}$ instead of $\Sigma$ which yields the desired result. \\
	Conversely suppose the bottom formula holds. Then fix an idempotent atom $b \in F(\Sigma)$ and let by assumption $x$ be the element with $f \leq x \, \wedge \, b \leq x^{-1}x$. It holds that $x|_{dom(f) \cup \Psi^{-1}(\{b\})} \in M$ because of the inclusion property. But this already yields that \[f \cup \{(\Psi^{-1}(b), x(\Psi^{-1}(b)))\} \in M. \] Quantifying over $\Sigma$ instead of $\{p \in F(\Sigma) \, | \, p \text{ an atom and idempotent } \}$ concludes the proof.   	
\end{proof}

\subsection{Proofs of section \ref{sc: CategoricalAxiomatizationModeloid}}

\setcounter{proposition}{6}	
	
\begin{proposition} \label{prop: endoIsSemimod}
	Let $\cat C$ be an inverse category with all zero elements and $M$ be a categorical modeloid on $\cat C$. Then for each object $X$ in $M$ we get that $End_M(X)$ is a semimodeloid (on itself).
\end{proposition}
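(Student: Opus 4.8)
The plan is to first identify the ambient inverse monoid with zero element on which $End_M(X)$ is to be a semimodeloid, and then to verify the four semimodeloid axioms by inheriting them from the corresponding categorical modeloid axioms of $M$. The natural candidate for the ambient structure is $End_C(X)$ itself: since every morphism in $End_C(X)$ has domain and codomain equal to $X$, the composition condition (\ref{eq: compPart}) is always satisfied, so composition restricted to $End_C(X)$ is total and $End_C(X)$ forms a one-object inverse category. By Proposition \ref{prop: oneObCatInvSemi} it is therefore an inverse semigroup; the identity morphism $X$ serves as the neutral element, making it an inverse monoid, and the hypothesis that $\cat C$ has all zero elements supplies the required zero element $0_{End_C(X)}$. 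Thus $End_C(X)$ is an inverse monoid with zero, the setting in which the predicate ``semimodeloid'' is meaningful.

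Next I would note that $End_M(X) = M \cap End_C(X)$ and check the axioms in turn. For closure under composition, if $x,y \in End_M(X)$ then $x \cdot y \in M$ by the first categorical modeloid axiom, and $x \cdot y : X \to X$, so $x \cdot y \in End_M(X)$. For closure under inverses, if $x \in End_M(X)$ then $x^{-1} \in M$ by the second axiom and $x^{-1} : X \to X$, whence $x^{-1} \in End_M(X)$. For the neutral element, the object $X$ lies in $M$ by the fourth categorical axiom and satisfies $X : X \to X$, so $X \in End_M(X)$. Each of these is an immediate transfer of a categorical axiom together with the trivial observation that composition and inversion preserve the property of being an endomorphism of $X$.

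The inclusion axiom is the one requiring a small amount of care, since it refers to the natural partial order. Here I would first record that the categorical natural partial order, when restricted to pairs of morphisms lying in $End_C(X)$, coincides with the inverse-semigroup natural partial order on $End_C(X)$: both are given by $s \leq t \logeq \exists \text{ idempotent } e \in End_C(X): s \cong t \cdot e$, and within $End_C(X)$ every composition exists, so Kleene equality collapses to ordinary equality. Granting this, the categorical inclusion axiom is in fact stronger than the semimodeloidal one, as it closes $M$ downward against all of $\cat C$ rather than merely against $End_C(X)$. Concretely, for $x \in End_M(X)$ and any $y \in End_C(X)$ with $y \leq x$, the third categorical axiom yields $y \in M$, and since $y : X \to X$ we obtain $y \in End_M(X)$.

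The only genuine obstacle is the verification that the two notions of natural partial order agree on $End_C(X)$ and that the restricted composition is total; once these bookkeeping points are settled, every semimodeloid axiom drops out of its categorical counterpart, and I expect no further substantive difficulty.
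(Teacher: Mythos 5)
Your overall strategy---exhibit an ambient inverse monoid with zero and then inherit the four semimodeloid axioms from the categorical modeloid axioms of $M$---is essentially the paper's own; the paper works directly with $End_M(X)$ and proves by hand that it is an inverse monoid, whereas you route through $End_C(X)$ and Proposition \ref{prop: oneObCatInvSemi}, which is a reasonable variation. The genuine gap lies exactly where you declare the matter settled: the claims that ``the composition condition (\ref{eq: compPart}) is always satisfied,'' that composition restricted to $End_C(X)$ exists for all pairs, and that ``Kleene equality collapses to ordinary equality'' are false in the paper's free-logic framework when the object $X$ does not exist. Recall that an object is any $c$ with $c \cong \domain{c}$, and this Kleene equality is satisfied by the unique non-existing morphism $\stern$ (both sides fail to exist); hence $\stern$ is an object of $\cat C$, it lies in $M$ by Axiom 4 of Definition \ref{def: catMod}, and the proposition's quantification over objects in $M$ includes it. For $X \cong \stern$ one has $End_C(X) = \{\stern\}$, no composition there exists (existence of $a \cdot b$ requires $\domain{a} \simeq \codomain{b}$ with both sides existing), and all relevant equations hold only as vacuous Kleene equalities. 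The paper's proof devotes an explicit case split to precisely this situation (``Now assume $X$ does not exist\dots''); your argument as written does not cover it, although the repair is easy since the degenerate case collapses to the one-element structure $\{\stern\}$.

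A second, smaller issue: you dismiss as a ``trivial observation'' the fact that composition preserves the property of being an endomorphism of $X$. In ordinary category theory it is trivial; in the axiomatic free-logic setting it is the main technical content of the paper's proof, which must derive $\domain{a \cdot b} \simeq \domain{b} \simeq X$ from the axioms via $a \cdot b \cdot \domain{a \cdot b} \simeq a \cdot b$ and $\domain{a \cdot b} \simeq \codomain{\domain{a \cdot b}}$, and similarly for the codomain. Since the purpose of the development is to work inside this axiomatization (with machine-checkable steps), these derivations should be spelled out rather than waved through. Finally, note that the statement asks for a semimodeloid \emph{on itself}; your version establishes a semimodeloid on the ambient $End_C(X)$, which yields the stated form only via the paper's remark that a semimodeloid is again an inverse monoid with zero element---a step worth making explicit.
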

\begin{proof}
	Fix an object $X$ in $M$. Once we have proven that $End_M(X)$ is an inverse monoid it will follow that $End_M(X)$ is a semimodeloid. That is because $End_M(X)$ then is closed under composition and taking inverses. Furthermore, the definition of the partial order defined on morphisms will simply reduce to the natural partial order. As a result the inclusion axiom also holds. And at last we have that $X \in End_M(X)$ with the property that $y \cdot X \cong y$ for all $y \in End_M(X)$ hence giving us the neutral element required by a semimodeloid.
	
	Let's now prove that $End_M(X)$ is an inverse monoid. First we will show the closure of the composition. For that fix two elements $a,b \in End_M(X)$. We know that $dom(a) \cong dom(b) \cong cod(a) \cong cod(b) \cong X$. 
	We distinguish two cases. First, we assume that $ X $ exists, so $ X \neq \stern $.
	Therefore, $ dom (a) \simeq cod (b) $ holds and as a result $ a \cdot b $ exists. Hence, we have $ a \cdot b \cdot dom (a \cdot b) \simeq a \cdot b $. This implies 
	\begin{equation}\label{eq: domcod}
		dom (a \cdot b) \simeq cod (dom (a \cdot b))
	\end{equation}
	and the existence of $ b \cdot dom (a\cdot b) $ from which we get $ dom (b) \simeq cod (dom (a \cdot b))$. By using (\ref{eq: domcod}) we deduce $ dom(b) \simeq dom (a\cdot b) $ but this holds $ X \simeq dom(a\cdot b) $. Similarly one obtains $ X \simeq cod (a\cdot b) $. As a result, $ a \cdot b \in End_M(X) $.
	
	Now assume $ X $ does not exist. This yields that $ a\cdot b $ does not exist since otherwise, $ dom(a) $ and $ cod(b) $ would exist which contradicts $ cod(b) \cong dom(a) \cong X $. As a result, also $ dom (a\cdot b) $ and $ cod (a\cdot b) $ do not exist and therefore, $ X \cong dom (a\cdot b) \cong cod (a\cdot b) $ holds since none of the terms is existing. Hence $ a\cdot b \in End_M(X) $.
	
	The closure of inverses is immediate because taken an element $s \in End_M(X)$, $s^{-1} \in M$ by assumption but $dom(s^{-1}) \cong cod(s^{-1}) \cong X$ and as a result $s^{-1} \in End_M(X)$. Now one can regard the inverse function on $End_M(X)$ as a restriction of $(^{-1})$ on $\cat C$. As a result the inverses are unique. Associativity follows by the fact that the composition in $M$ really is the composition in $\cat C$ restricted to $M$. Above we have already taken care of the neutral element. 
\end{proof}
	
\setcounter{theorem}{2}	
\begin{theorem}
	Let $\cat C$ be an inverse category with all zero elements and let $M$ be a categorical modeloid on $\cat C$. Then
	\[ \bigcup_{X,Y \in M} D(Hom_M(X,Y)) \] is a categorical modeloid on $\cat C$.
\end{theorem}	
\begin{proof}
	Assume the assumptions formulated above. Then we define \[ H := \bigcup_{X,Y \in M} D(Hom_M(X,Y)). \] We will prove (1) that all objects from $\cat C$ are in $H$, (2) the closure of taking inverses from H, (3) the closure of the composition on $H$ and (4) the inclusion property hold which is \[\forall \, s \in \cat C \, \forall t \in H: s \leq t \Rightarrow s \in H.\]
	
	\begin{enumerate}
		\item Fix an object $X \in \cat C$. Since $M$ is a categorical modeloid, $X \in M$ and, furthermore $X \in End_M(X)$. We need to prove that $X \in D(End_M(X))$. So fix an idempotent atom $a \in End_M(X)$. We show that $X \leq X$ and $a \leq X^{-1}X$.  
		Note that, since $X \cong dom(X)$, $X$ is idempotent by axiom $S5$ of a category. As a result $X = X\cdot X$ and hence $X \leq X$.
		On the other hand by definition of $End_M(X)$ we have that $a \cong a \cdot X \cong a \cdot X^{-1}X$. Because $a$ is idempotent it commutes with $X^{-1}X$ and hence by definition $a \leq X^{-1}X$. The second part of the condition posed by the derivative holds simply because $X^{-1}X = XX^{-1}$.
		
		\item Next take an element $s \in H$. Then $s \in D(Hom_M(X,Y))$ for some $X,Y \in M$. As a result we can write $s$ as $s:X\to Y$. We know that $s^{-1}:Y\to X \in Hom_M(Y,X)$ and want to show that $s^{-1}:Y\to X \in D(Hom_M(Y,X))$. 
		
		Fix an idempotent atom $a \in End_M(Y)$. Then, since $s \in H$, we find $h \in Hom_M(X,Y): s \leq h \wedge a \leq hh^{-1}$. Also $s^{-1} \leq h^{-1}$ and $h^{-1} \in Hom_M(Y,X)$. Because $ a \leq hh^{-1}$ we know that $a \leq (h^{-1})^{-1}h^{-1}$. In total that yields 
		\begin{align*} 
		\forall \text{ idempotent atoms }& a \in End_M(Y) \\
		&\exists h' \in Hom_M(Y,X): (s^{-1} \leq h' \wedge a \leq h'^{-1}h') 
		\end{align*}
		The second condition required by the derivative follows by an analogous construction. Hence $s^{-1} \in H$.
		
		\item Up now is the closure of the composition. Let $s, t \in H$. We want to show that $t\cdot s \in H$. We will do this by case analysis.
		
		Case 1: $s$ or $t$ does not exist. W.l.o.g. $s$ is non-existent. Then $s=\stern \in H$ and $t\cdot s = \stern \in H$ because if $t\cdot s$ existed, so would $t$ and $s$.
		
		Case 2: $s$ and $t$ exist but $dom(t) \not \simeq cod(s)$. This implies that there are two different objects in $\cat C$. This means that $t\cdot s = \stern \in \cat C$, the unique non-existing element. Because by definition $s,t \in M$ and $M$ is closed for composition this yields $\stern \in M$. 
		
		But then $\stern \in Hom_M(\stern,\stern)$. Now we show that $\stern \in D(Hom_M(\stern,\stern))$. Note that $End_M(\stern) = \{\stern\}$ and $\stern$ is idempotent. As a result $\stern$ is by default an atom. But because $\stern \leq \stern \, \wedge \, \stern \leq \stern \stern^{-1}$ we get that $\stern \in D(Hom_M(\stern,\stern))$ as desired since the second part of the derivative reduces to what we have just shown. As a result we have $t \cdot s \in H$.
		
		Case 3: $s$ and $t$ exist and $dom(t) \simeq cod(s)$. As a result the composition exists and we can write $s$ as $s: X \to Y$ and $t$ as $t:Y\to Z$ for $X \simeq dom(s), Y \simeq cod(s) \text{ and } Z \simeq cod(t)$. As a result $dom(ts) \simeq dom(s)$ and $cod(ts) \simeq cod(t)$. As a result we want to show that $t\cdot s \in D(Hom_M(X,Z))$. First we will prove
		\begin{align} \begin{split} \label{eq: result}
		\forall \text{ idempotent atoms }& a \in End_M(X) \\
		&\exists h \in Hom_M(X,Z): ((t\cdot s) \leq h \wedge a \leq h^{-1}h). 
		\end{split} \end{align}
		For that fix such an idempotent atom $a \in End_M(X)$. We use the assumptions about $s$ now. That yields \begin{equation} \label{eq: assf} \exists f \in Hom_M(X,Y): s \leq f \, \wedge \, a \leq f^{-1}f. \end{equation} The idea is now to construct something which can be thought of as applying $f$ to $a$ which will be an idempotent atom in $End_M(Y,Y)$. This construction is $fa(fa)^{-1}$.
		
		First note that $fa(fa)^{-1} \simeq faa^{-1}f^{-1} \simeq faf^{-1}$. We get that $dom(faf^{-1}) \simeq dom(f^{-1}) \simeq Y$ and $cod(faf^{-1}) \simeq cod(f) \simeq Y$. As a result $faf^{-1} \in End_M(Y,Y)$.
		
		Next we wish to show that $faf^{-1}$ is an idempotent atom. For idempotence see that $faf^{-1}\cdot faf^{-1} \simeq fa\cdot af^{-1} \simeq faf^{-1}$ by using that from (\ref{eq: assf}) $a \simeq f^{-1}fa$ and the fact that $a$ is idempotent.
		
		In order to show that $faf^{-1}$ is an atom in $End_M(Y,Y)$, we assume $c \leq faf^{-1}$ for $c \in End_M(Y,Y)$ and show that $c \simeq faf^{-1} \vee c \simeq 0$ where $0$ is the zero element of $End_M(Y,Y)$. So let $c \leq faf^{-1}$. Then
		\[ c \leq faf^{-1} \Rightarrow f^{-1}c \leq af^{-1} \Rightarrow f^{-1}cf \leq a. \]
		But because $a$ is an atom by assumption it follows that $f^{-1}cf \simeq a \vee f^{-1}cf \simeq 0$. The later implies that $c \simeq 0$. We prove this by contraposition. 
		
		Suppose $c \not \simeq 0$. $c \simeq faf^{-1}c^{-1}c$ and hence $f^{-1}c \not \simeq 0$ since otherwise $c \simeq 0$. Since $c$ is idempotent by the fact that $faf^{-1}$ is idempotent, $(f^{-1}c)^{-1} \simeq cf$ and by axiom $C1$ of an inverse category $cf \not \simeq 0$. But $cf \simeq faf^{-1}cf$ and as a result $f^{-1}cf \not \simeq 0$.
		
		We may now assume that $f^{-1}cf \simeq a$. As a result $f^{-1}cf$ is an idempotent atom. We wish to show now that $c$ is an inverse of $faf^{-1}$ because this will yield that $c \simeq faf^{-1}$ by the fact that $faf^{-1}$ is its own inverse and as such unique.
		
		It is immediate that \[ faf^{-1} \cdot c \cdot faf^{-1} \simeq faaaf^{-1} \simeq faf^{-1}. \]
		Furthermore,
		\begin{align*}
		& c \cdot faf^{-1} \cdot c \\
		\simeq  \quad & faf^{-1}c \cdot faf^{-1} \cdot faf^{-1}c \\
		\simeq  \quad & faf^{-1}c \cdot faf^{-1}c \\
		\simeq  \quad & faf^{-1}c \\
		\simeq  \quad & c.
		\end{align*}
		
		We conclude that $faf^{-1}$ is indeed an atom, idempotent and an element of $End_M(Y,Y)$. We now use the assumption about $t$ which yields
		\begin{equation} \label{eq: assk} 
		\exists k \in Hom_M(Y,Z): t \leq k \, \wedge \, faf^{-1} \leq k^{-1}k. 
		\end{equation}
		
		We are now in the position to say that we can find $h$ such that (\ref{eq: result}) is satisfied. For this set $h = kf$. Because $s \leq f$ and $t \leq k$ by (\ref{eq: assf}) and (\ref{eq: assk}) respectively we have that $t\cdot s \leq kf$. Furthermore, by (\ref{eq: assk}) it holds that 
		\begin{align}
		& faf^{-1} \simeq k^{-1}kfaf^{-1} \nonumber \\ 
		\Rightarrow \quad & faf^{-1} \simeq faf^{-1}k^{-1}kfaf^{-1} \label{st: 1} \\
		\Rightarrow \quad & af^{-1} \simeq af^{-1}k^{-1}kfaf^{-1} \label{st: 2} \\
		\Rightarrow \quad & a \simeq af^{-1}k^{-1}kfa \label{st: 3} \\
		\Rightarrow \quad & a \leq f^{-1}k^{-1}kf \label{st: 4} \\
		\Rightarrow \quad & a \leq h^{-1}h. \nonumber
		\end{align}
		
		\ref{st: 1} follows by the fact that $faf^{-1}$ is idempotent. \ref{st: 2} and \ref{st: 3} follow because $a = f^{-1}fa$. Then \ref{st: 4} follows because $a$ is idempotent.
		
		Hence we proved (\ref{eq: result}). The second part of the derivative is proven in a similar way and we leave it to the reader to write down the details.
		
		\item What is left to show is that the inclusion property holds in $H$. For this task fix a morphism $s \in \cat C$ with the property that $s \leq t$ for some $t \in H$. Since $M$ is a categorical modeloid and also $t \in M$ we know that $s \in M$. As a result we need to show that 
		\[ s \in D(Hom(dom(t), cod(t))) \] by the fact that $s \leq t$ implies that the domain and codomain of $s$ and $t$ are equal.
		
		Since $t \in H$ we know that  
		\begin{align*} 
		\forall \text{ idempotent atoms }& a \in End_M(dom(t)) \\
		&\exists f \in Hom_M(dom(t),cod(t)): (t \leq f \wedge a \leq f^{-1}f) 
		\end{align*}
		and
		\begin{align*} 
		\forall \text{ idempotent atoms }& b \in End_M(cod(t)) \\
		&\exists k \in Hom_M(dom(t),cod(t)): (t \leq k \wedge b \leq kk^{-1}). 
		\end{align*}
		Because $s \leq t$ and the fact that $\leq$ is a partial order we get that $s \leq f$ and $s \leq k$. But this already implies that $s \in D(Hom(dom(t), cod(t)))$.
	\end{enumerate}
\end{proof}

\subsection{Proofs of section \ref{sc: AlgebraicEFGames}}

\begin{proposition}
	$\cat D := (C,dom,cod,\cdot,\stern,^{-1})$ is an inverse category where $f^{-1}$ denotes the inverse of each partial isomorphism $f$ and $\stern^{-1} = \stern$. The existing elements are exactly all elements in $F(\mathcal{A}, \mathcal{B})$ and the compositions $f \circ g$ in case $dom(f) = cod(g)$ for $f,g \in F(\mathcal{A}, \mathcal{B})$.
\end{proposition}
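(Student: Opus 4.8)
The plan is to verify, in turn, the free-logic category axiom schema and then the skolemized inverse-category axioms, exploiting throughout that $F(\struc A, \struc B)$ is nothing but the set of all partial isomorphisms among the two structures and that partial composition of such maps is associative wherever it is defined. Since $\cat D$ is built on a free-logic foundation, I would first fix the existence predicate by recording that a morphism exists exactly when it differs from $\stern$. The whole argument then reduces to showing (a) that each $f \in F(\struc A, \struc B)$ behaves as an existing morphism with $\domain f \cong \idarrow X$ and $\codomain f \cong \idarrow Y$ where $\idarrow X, \idarrow Y \in F(\struc A, \struc B)$ are genuine identity isomorphisms, and (b) that $\stern$ is absorbed by $\cdot$ on both sides, so that a product $f \cdot g$ exists iff $f, g \neq \stern$ and $\domain f \cong \codomain g$, in which case it equals $f \circ g$. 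Point (b) is immediate from the case split in the definition of $\cdot$, and it already yields the description of the existing elements claimed in the proposition: the existing morphisms are precisely the members of $F(\struc A, \struc B)$, which decompose into the partial isomorphisms themselves together with those composites $f \circ g$ that arise when $\domain f \cong \codomain g$, the latter being again partial isomorphisms and hence again in $F(\struc A, \struc B)$.

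Next I would discharge the category axioms one by one. The domain and codomain existence laws hold because $\domain \stern \cong \codomain \stern \cong \stern$ while $\domain f \cong \idarrow X$ and $\codomain f \cong \idarrow Y$ lie in $F(\struc A, \struc B)$; idempotency of $\domain$ and $\codomain$ follows since $\idarrow X \in Part(X,X)$ is its own domain and codomain. The identity laws $f \cdot \domain f \cong f$ and $\codomain f \cdot f \cong f$ reduce to $f \circ \idarrow X = f = \idarrow Y \circ f$, which holds even though the set-theoretic domain of $f$ may be a proper subset of $X$, precisely because $\idarrow X$ is the total identity on the object $X$ rather than on the actual domain of $f$. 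The one axiom deserving genuine care is associativity $f \cdot (g \cdot h) \cong (f \cdot g) \cdot h$: when every relevant product exists this is just associativity of partial-function composition, but in free logic I must also check that the two sides are non-existent in exactly the same degenerate configurations, that is, that the object-level matching conditions propagate symmetrically through either bracketing. Establishing that $f \cdot (g \cdot h) \cong \stern$ holds precisely when $(f \cdot g) \cdot h \cong \stern$ is the main obstacle, though it amounts to a finite case analysis over which factors equal $\stern$ and whether the successive domains and codomains agree.

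Finally I would verify the inverse-category axioms. For an existing $f : X \to Y$ I would take $\hat f := f^{-1}$, the partial-bijection inverse; since partial isomorphisms are closed under inversion, $\hat f \in Part(Y, X) \subseteq F(\struc A, \struc B)$ with $\domain{\hat f} \cong \idarrow Y$ and $\codomain{\hat f} \cong \idarrow X$, so $f \cdot \hat f \cdot f$ and $\hat f \cdot f \cdot \hat f$ both exist, and Proposition \ref{prop: PropertiesPartBij}(1) gives $f \cong f \cdot \hat f \cdot f$ and $\hat f \cong \hat f \cdot f \cdot \hat f$. For $\stern$ I would set $\hat \stern := \stern$; every product involving $\stern$ collapses to $\stern$, so both defining equations hold by Kleene equality with both sides non-existent. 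Uniqueness follows from the same proposition: any $y$ satisfying the two equations with respect to an existing $f$ cannot be $\stern$, for otherwise $f \cdot y \cdot f \cong \stern \not\cong f$; it must therefore compose with $f$ on both sides, forcing $\domain y \cong \codomain f$ and $\codomain y \cong \domain f$, whence $y = f^{-1}$ by Proposition \ref{prop: PropertiesPartBij}(1). For $\stern$ itself the second equation already forces $y \cong \hat\stern \cdot \stern \cdot \hat\stern \cong \stern$, so $y = \stern$. This establishes the existence and uniqueness of the generalized inverse, completing the proof that $\cat D$ is an inverse category with the stated existing elements.
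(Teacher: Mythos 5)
Your proposal is correct and follows essentially the same route as the paper: the paper's proof simply states that the category axioms follow from the construction of $dom$, $cod$ and $\cdot$, and that the inverse-category axioms follow from Proposition \ref{prop: PropertiesPartBij}, which is exactly the structure of your argument. You merely spell out the details the paper leaves implicit (the absorbing behaviour of $\stern$, the identity and associativity checks, and existence plus uniqueness of inverses via Proposition \ref{prop: PropertiesPartBij}(1)), which is a faithful elaboration rather than a different approach.
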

\begin{proof}
	It is easy to verify that the axioms for a category hold by the constructions of the functions. Using proposition \ref{prop: PropertiesPartBij} it also follows that the axioms additionally required by an inverse category hold.
\end{proof}

\begin{corollary}
	$\cat D := (C,dom,cod,\cdot,\stern,^{-1})$ is also a categorical modeloid on itself.
\end{corollary}
\begin{proof}
	Closure of composition and taking inverses follow by the totality of $\cdot$ and $^{-1}$. The inclusion property and the requirement that all objects of $\cat C$ be in $\cat C$ trivially hold since the modeloid is on itself.
	
	What is to show is that $\cat C$ has a zero element for each endoset $End_C(X)$ where $X$ is an object of $\cat C$. $End_C(X)$ trivially includes the partial isomorphism that is only defined on the constant symbols of $\tau$. We denote it by $0_X$. To see that this is a zero element first note that $\forall p \in End_C(X): 0_X \subseteq p$ by definition of a partial isomorphism. By the definition of partial composition it follows that
	\[ dom(p \circ 0_x) = 0_X^{-1}(cod(0_X) \cap dom (p)) = 0_X^{-1}(dom(0_X)) = dom(0_X). \] As a result we have that $p \circ 0_X = 0_X \, \forall p \in End_C(X)$.
\end{proof}

\begin{theorem}
	Let $M$ be the categorical modeloid $\cat D$. Then
	\[ \exists h:X \to Y \in D^m(M) \text{ with } h \neq \stern \Longleftrightarrow X \equiv_m Y, \quad m \in \mathbb{N} \]
\end{theorem}
\begin{proof}
	'$\Rightarrow$': \quad First we define $ X $ and $ Y $ to be the sets which are uniquely associated to $ dom(h) $ and $ cod(h) $ respectively by the fact that $ dom(h) = id_X $ and $ cod(h) = id_Y $.
	 Then define $I_j := D^j(M) \cap Part(X,Y)$ for $0\leq j\leq m$. We want to prove that $(I_j)_{0\leq j\leq m}$ is a $m$-round winning strategy between $X$ and $Y$. Because $D^m(M) \subseteq D^{m-1}(M) \subseteq ... \subseteq D(M) \subseteq M$ holds all $I_j$ are non-empty and it is clear that $I_j \subseteq Part(X,Y)$ for $0\leq j\leq m$. To show the forth property fix $a\in X$ and some $g\in I_{j+1}$ for $j < m$. Then we know by the definition of $I_{j+1}$ that $dom(g) = X$ and $cod(g) = Y$. Furthermore we have that 
	\[ \forall \text{ idempotent atoms } c\in End_M(X) \exists f\in Hom(X,Y): g \leq f \, \wedge \, c \leq f^{-1}f.  \]
	Similar to proposition \ref{prop: <subsetEq} we have that $s \leq t \Leftrightarrow s \subseteq t$ for $s,t \in Part(X,Y)$. Note that $End_M(X)$ is just $Part(X,X)$ and $Hom_M(X,Y) \subseteq Part(X,Y)$. Now let $0_X$ denote the zero element of $End_M(X)$. We have that $e := 0_X \cup \{(a,a)\}\in End_M(X)$ is an idempotent atom. As a result there is $f\in Hom(X,Y)$ such that $e \subseteq f^{-1}f$. But then we also have that $a \in dom(f)$ and $g \subseteq f$. Hence the forth property holds. The back property follows in a similar way.  
	
	'$\Leftarrow$': \quad Again define $ X $ and $ Y $ to be the sets which are uniquely associated to $ dom(h) $ and $ cod(h) $ respectively. Assume that $(I_j)_{0\leq j\leq m}$ is a $m$-round winning strategy between $X$ and $Y$. Note first that $\forall j \, \forall x: x\in Part(X,Y)$ implies that $x$ exists in terms of free logic. We now want to prove $I_j \subseteq D^j(M)$ by induction on $j$. The base case is clear since $I_0 \subseteq M$. For the induction step take $j \mapsto j+1$. Fix $g\in I_{j+1}$. Then by assumption \[ \forall a\in X \, \exists f\in I_j: g \subseteq f \, \wedge \, a \in dom(f). \] By the induction hypothesis it follows that $f \in D^j(M)$. It is easy to check that the set $E := \{0_X \cup \{(c,c)\} \, | \, c \in X \}$ resembles exactly all idempotent atoms in $End_M(X)$. Hence fix $\hat{a} \in E$. By construction $\hat{a} = 0_X \cup \{(v,v)\}$ for some $v\in X$. Now we know that $\exists f\in I_j: g \subseteq f \, \wedge \, v \in dom(f)$. This yields $g \leq f$ and $\hat{a} \leq f^{-1}f$ again similar to proposition \ref{prop: <subsetEq} and by the fact that $f$ is a partial isomorphism and hence $0_X \subset f^{-1}f$. The second condition of the derivative is shown to be true in a similar way. As a result $g \in D^{j+1}(M)$.
\end{proof}

\end{document}